\newcommand{\E}{\mathbb{E}}
\newcommand{\eps}{\varepsilon}
\theoremstyle{definition}
\title{Competitive Capacitated Online Recoloring}
\author{Rajmohan {Rajaraman}}{Northeastern University, Boston, MA, USA \and \url{https://www.khoury.northeastern.edu/home/rraj/}}{r.rajaraman@northeastern.edu}{}{}
\author{Omer Wasim}{Northeastern University, Boston, MA, USA \and \url{https://sites.google.com/view/omer-wasim/home}}{wasim.o@northeastern.edu}{}{}
\authorrunning{R. Rajaraman and O. Wasim}
\keywords{online algorithms, competitive ratio, recoloring, resource augmentation}
\begin{document}

\maketitle

\begin{abstract}
In this paper, we revisit the \emph{online recoloring} problem introduced recently by Azar, Machluf, Patt-Shamir and Touitou \cite{azar_et_al:LIPIcs.ICALP.2022.13} to investigate algorithmic challenges that arise while scheduling virtual machines or processes in distributed systems and cloud services. In online recoloring, there is a fixed set $V$ of $n$ vertices and an initial coloring $c_0: V\rightarrow [k]$ for some $k\in \mathbb{Z}^{>0}$. Under an online sequence $\sigma$ of requests where each request is an edge $(u_t,v_t)$, a proper vertex coloring $c$ of the graph $G_t$ induced by requests until time $t$ needs to be maintained for all $t$; i.e., for any $(u,v)\in G_t$, $c(u)\neq c(v)$.  In the distributed systems application, a vertex corresponds to a VM, an edge corresponds to the requirement that the two endpoint VMs be on different clusters, and a coloring is an allocation of VMs to clusters.  The objective is to minimize the total weight of vertices recolored for the sequence $\sigma$. In \cite{azar_et_al:LIPIcs.ICALP.2022.13}, the authors give competitive algorithms for two polynomially tractable cases -- $2$-coloring for bipartite $G_t$ and $(\Delta+1)$-coloring for $\Delta$-degree $G_t$ -- and lower bounds for the fully dynamic case where $G_t$ can be arbitrary. 

We obtain the first competitive algorithms for \textit{capacitated} online recoloring and fully dynamic recoloring, in which there is a bound on the number or weight of vertices in each color.  Our first set of results is for $2$-recoloring using algorithms that are $(1+\varepsilon)$-resource augmented where $\varepsilon\in (0,1)$ is an arbitrarily small constant.  Our main result is an $O(\log n)$-competitive \textit{deterministic} algorithm for weighted bipartite graphs, which is \textit{asymptotically optimal} in light of an $\Omega(\log n)$ lower bound that holds for an unbounded amount of augmentation.  We also present an $O(n\log n)$-competitive deterministic algorithm for \textit{fully dynamic} recoloring, which is optimal within an $O(\log n)$ factor in light of a $\Omega(n)$ lower bound that holds for an unbounded amount of augmentation.

Our second set of results is for $\Delta$-recoloring in an $(1+\varepsilon)$-overprovisioned setting where the maximum degree of $G_t$ is bounded by $(1-\varepsilon)\Delta$ for all $t$, and each color assigned to at most $(1+\varepsilon)\frac{n}{\Delta}$ vertices, for an arbitrary $\varepsilon > 0$.  Our main result is an $O(1)$-competitive randomized algorithm for $\Delta = O(\sqrt{n/\log n})$. We also present an $O(\Delta)$-competitive deterministic algorithm for $\Delta \le \varepsilon n/2$.  Both results are \emph{asymptotically optimal}.
\end{abstract}

\section{Introduction}
The challenge of efficiently allocating virtual machines (VMs) across large clusters in a cloud or data center is well documented \cite{innovationsfordatacenter, silva2018approachesforoptvmplacementandmigration, vmmigrationplanninginsdns}. Most big-data applications are inherently distributed. To minimize congestion, communication costs, and service delays incurred due to inter-VM and inter-process communication, distributed systems are demand-aware \cite{avin2020demand}, requiring scheduling algorithms to efficiently adapt to network traffic. While co-locating frequently communicating VMs and processes in a single cluster (corresponding to affinity requests in cloud or VM environment) is desirable to minimize communication overhead, service providers are also required to satisfy anti-affinity requests \cite{anti-aff,antiaff-2}, which require that certain VMs must not be co-located. Large-scale deployment systems including Kubernetes and VMWare support anti-affinity rules to provide safety, performance and robustness \cite{kubernetes, vmware}.

The online recoloring problem (also referred to as online disengagement) was introduced recently by Azar, Machluf, Patt-Shamir and Touitou \cite{azar_et_al:LIPIcs.ICALP.2022.13} to investigate algorithmic challenges that arise in scheduling of VMs in distributed systems and cloud computing. In this problem, we have a graph $G$ on a fixed set of $n$ vertices (corresponding to VMs) with weights (corresponding to VM sizes) and edges (corresponding to anti-affinity requests) are revealed over time. The goal is to ensure that endpoints of every edge encountered are assigned a different color (i.e. the corresponding VMs scheduled on different clusters). The graph induced by the requested edges is assumed to be $k$-colorable (colors corresponding to $k$ clusters) and the objective is to minimize the total weight of vertices recolored throughout time. The cost of recoloring any vertex captures the size and migration cost of the corresponding VM. 

In \cite{azar_et_al:LIPIcs.ICALP.2022.13}, the authors study the \textit{uncapacitated} version where the total weight of vertices assigned a color is unconstrained. Since vertex coloring is NP-hard \cite{karp2010reducibility}, they focus on polynomially tractable cases when the graph induced by the request sequence is either bipartite ($k=2$) or has maximum degree at most $\Delta$ ($k=\Delta+1$). In the static setting, both cases admit linear-time \textit{greedy algorithms}. In the online setting, while existence of a proper $k$-coloring guarantees that any vertex needs to be recolored at most once, the uncertainty of future requests can force multiple recolorings. The \textit{capacitated} online recoloring problem captures the practical scenario where cluster sizes are constrained, i.e. the total weight of vertices assigned any color is bounded. Resolving this was left as an open problem in \cite{azar_et_al:LIPIcs.ICALP.2022.13}. 
\subparagraph{Online Recoloring.} We formally describe the online recoloring problem \cite{azar_et_al:LIPIcs.ICALP.2022.13}. Given a fixed set of $n$ vertices denoted by $V$, weight function $w:\,V\rightarrow \mathbb{R}^+$ and an initial coloring $c_0:\, V\rightarrow [k]$ (where $[k]=\{1,2,...,k\}$), an algorithm maintains a $k$-coloring $c:\,\,V\rightarrow [k]$ for $V$ under an online sequence of requests, $\sigma=\{\sigma_i\}_{i=1}^T$ where $\sigma_i$ is an edge $(u_i, v_i)$ between $u_i,v_i\in V$. Let $\sigma_{\leq t}=\{\sigma_i\}_{i=1}^t$. For all $t$, $c$ must be a proper $k$-coloring for $G_t$, the subgraph induced by $\sigma_{\leq t}$, i.e. for all $(u_i, v_i)$ where $i\leq t$, $c(u_i)\neq c(v_i)$. At any time, any vertex $v$ can be recolored for a cost $w(v)$. In the unweighted setting, $w(v)=1$ for all $v\in V$.
The objective is to minimize the total recoloring cost incurred for the request sequence $\sigma$. 

We analyze online algorithms in the standard competitive analysis framework. 
Let the total cost by an optimal algorithm (resp., an online algorithm $\mathcal{A}$) on a request sequence $\sigma$ be denoted by $OPT_{\sigma}$ (resp. $\mathcal{A}_{\sigma}$).  
An algorithm $\mathcal{A}$ for online recoloring is $\alpha$-competitive if for any request sequence $\sigma$, $\mathcal{A}_{\sigma}\leq \alpha OPT_{\sigma}$.  Note that $\sigma$ can be assumed to be finite w.l.o.g. (in particular, $|\sigma|=O(n^2)$) since a repeated edge request incurs no cost.

\subparagraph{Fully Dynamic Recoloring.} In the above formulation, the graph induced by the request sequence is assumed to be $k$-colorable. We also consider a \textit{fully dynamic} variant \cite{azar2023competitive}, which captures the case when requests are temporary, i.e, the graph induced by the request sequence is not necessarily $k$-colorable. For all $t$, it is \textit{only} required to maintain that $c(u_t)\neq c(v_t)$ for the current request $\sigma_t=(u_t, v_t)$. Note that the request sequence can be unbounded in this model. An algorithm $\mathcal{A}$ for fully dynamic recoloring is $\alpha$-competitive if for any request sequence $\sigma$, $\mathcal{A}_{\sigma}\leq \alpha OPT_{\sigma}+ \mu$ where $\mu$ is a constant independent of request sequence $\sigma$.

\enlargethispage{\baselineskip}
\subparagraph{Capacitated Online and Fully Dynamic Recoloring.} For \textit{uncapacitated} recoloring described above, the total weight of vertices assigned a particular color is unconstrained. In \emph{capacitated} recoloring, the total weight of vertices assigned color $i$ is at most $W_i$ for all $i$ at any time $t$. In this paper, we focus our attention on the \textit{uniform} capacity case, i.e. $W_i=B$, for all $i\in [k]$. In the unweighted case, $B$ is the \textit{number} of vertices assigned any color $i\in [k]$. For capacitated online and fully dynamic recoloring, we assume that the graph induced by $\sigma$ admits a proper $k$-coloring while respecting capacity constraints for each $i\in [k]$.

\subparagraph{Resource Augmentation.} Our algorithms for online and fully dynamic $2$-recoloring \textit{mildly utilize} resource augmentation: we assume the total weight of vertices that can be assigned color $i$ is $B(1+\varepsilon)$ $\forall i\in [k]$ for an arbitrarily small constant $0<\varepsilon<1$. The competitivene ratio is analyzed with respect to an offline algorithm constrained to a capacity of $B$ for all $i$. Nevertheless, our deterministic and randomized lower bounds hold for resource augmented algorithms--thus, establishing tightness of the upper bounds we obtain. 

\subparagraph{Overprovisioned Setting.} For $\Delta$-recoloring, we introduce and study an \emph{overprovisioned} setting where the graph induced by $\sigma$ has maximum degree $(1 - \varepsilon)\Delta$ and each color $i\in [\Delta]$ has capacity $(1 + \varepsilon)n/\Delta$. In contrast to the resource augmented framework, the competitive ratio is analyzed with respect to a offline algorithm with the same number of colors and unconstrained capacity. Remarkably, even in this pessimistic analytical framework, we obtain tight deterministic and randomized algorithms.

\subsection{Related Work}
\subparagraph{Online Recoloring.} Azar el al. \cite{azar_et_al:LIPIcs.ICALP.2022.13} recently introduced the online (uncapacitated) recoloring problem. They give optimal deterministic and randomized algorithms for online recoloring when $k=2$ and $k=\Delta+1$. For the bipartite case, they give an optimal $O(\log n)$-competitive deterministic algorithm and a lower bound of $\Omega(\log n)$, which holds for randomized algorithms. For $(\Delta+1)$-recoloring, they give a deterministic $O(\Delta)$-competitive algorithm, a $O(\log \Delta)$-competitive randomized algorithm, and matching lower bounds. They introduce fully dynamic recoloring and consider three equivalent edge insertion models and show that fully dynamic recoloring has substantially worse competitive ratios than online recoloring: for $2$-coloring, they consider edge requests on a length $n$-odd cycle (which cannot be properly colored in the online recoloring setting) and provide a reduction from the metrical task systems (MTS) problem on a length $n$-odd cycle, yielding $\Omega(n)$ deterministic and $\Omega (\frac{\log n}{\log \log n})$ randomized lower bounds respectively for fully dynamic recoloring.  

In subsequent work \cite{azar2023competitive}, Azar et al.\ give general lower bounds for $k$-coloring. They provide $\Omega(k\log (n/k)$ deterministic and $\Omega (\log k\cdot \log (n/k))$ and randomized lower bounds. They also present lower bounds for capacitated $k$-coloring, including an $\Omega(n)$ deterministic and an $\Omega(\log n)$ randomized lower bound, for the \textit{tight} case when $kB=\sum_{v\in V}w(v)$. Note that all lower bounds given in \cite{azar2023competitive, azar_et_al:LIPIcs.ICALP.2022.13} hold even for \textit{unweighted} instances, i.e. when $w(v)=1,\, \forall v\in V$.

\subparagraph{Dynamic Coloring.} Maintaining a proper coloring in fully dynamic graphs is well studied \cite{henzinger2020explicit, henzingerpeng}. Since determining the chromatic number number of a graph is NP-hard \cite{karp2010reducibility}, most works focus on restricted classes of graphs, bounded chromatic number or maximum degree. In dynamic coloring, an edge (or vertex) is either inserted or deleted from the graph and the algorithm is required to quickly re-compute a proper coloring. The objective is to minimize the update time \cite{bhattacharya2022fully, wein}, and/or the recourse \cite{10.1007/s00453-022-01050-7, barba2017dynamic} (i.e. the number of recolored vertices after an update) in a worst-case or amortized sense. Kashyop et al.~\cite{10.1007/s00453-022-01050-7},  Barba et al.~\cite{barba2017dynamic} and Bosek et al.~\cite{bosek_et_al:LIPIcs.SWAT.2020.17} study trade-offs between the number of colors, number of recolorings and update time for dynamic coloring for various classes of graphs (including bipartite, bounded degree and arboricity, and interval graphs). 
The crucial difference between dynamic and online recoloring is the fact that the number of recolorings in the former do not guarantee competitiveness in the latter; a dynamic algorithm incurring $O(1)$ recolorings per update only guarantees $O(T)$ recolorings where $T=|\sigma|$ in the online setting. When $OPT_{\sigma}\ll T$, this precludes competitiveness which is measured against $OPT_{\sigma}$. 
\subparagraph{Online Balanced Graph Partitioning.} Online balanced graph partitioning (OBGR) introduced by Avin et al., \cite{avin2020dynamic} and later explored in \cite{racke2022approximate,henzinger2021tight, pacut2021optimal, rajaraman2022improved} is related to online recoloring. While the objective of recoloring is to optimize allocations of VM's under \textit{anti-affinity} requests, OBGR deals with \textit{affinity} requests. In OBGR, given a graph $G$ on $n$ vertices that are initially assigned to $\ell$ servers each of capacity $k$, for any online request $(u,v)$, an algorithm incurs a unit cost if $u$ and $v$ are in different clusters and 0 otherwise. Any vertex can be migrated for a unit cost at any time. The goal is to minimize the total migration and communication costs incurred for the request sequence. Since the static version of balanced graph partitioning is known to be NP-hard even for $\ell= 2$ (the minimum bisection problem) \cite{garey1974some} resource augmentation and randomization have been employed to obtain competitive algorithms (some taking exponential time) for OBGR. Avin et al.~\cite{avin2020dynamic} give an $O(k^2\ell^2)$ competitive algorithm without resource augmentation, and a $O(k\log k)$ exponential time algorithm with $(2+\varepsilon)$-augmentation (improved in \cite{forner2021online} with $(1+\varepsilon)$-augmentation). For the \textit{learning} variant of OBGR, Henzinger et al.~\cite{henzinger2021tight} give optimal deterministic $O(\ell\log k)$- and randomized $O(\log k+\log\ell)$-competitive algorithms. R\"acke et al~\cite{racke2022approximate} give an $O(\log n)$-approximation algorithm with $(2+\varepsilon)$-augmentation for the \textit{offline} case. The current best deterministic online algorithm for OBGR \cite{rajaraman2022improved} obtains $O(k\ell\log k)$-competitiveness with $(1+\varepsilon)$-augmentation. 

\subsection{Our results}
We present the first competitive algorithms for \textit{capacitated} online 2-recoloring, making progress on open problems posed by Azar et al.~\cite{azar_et_al:LIPIcs.ICALP.2022.13}.  We also introduce the capacitated $\Delta$-recoloring problem in an overprovisioned setting and obtain asymptotically optimal algorithms.

\subparagraph{Capacitated \texorpdfstring{\boldmath $2$}{2}-Recoloring.}
Our algorithms for capacitated 2-recoloring hold with $(1+\varepsilon)$-resource augmentation, i.e. the algorithms utilize a capacity of $W=(1+\varepsilon)B$ for each cluster, while the optimal offline algorithm has a capacity of $B$ where $B=\frac{\sum_{v\in V}w(v)}{2}$, (w.l.o.g., assume $\sum_{v\in V}w(v)$ is even) and $\varepsilon\in (0,1)$ is a small constant.\\
\underline{\textit{Fully Dynamic Recoloring}:} Our first result is an algorithm for \textit{fully dynamic} capacitated recoloring which works for unweighted instances. 
\begin{restatable}{theorem}{fdthm}
\label{fdthm}
There exists a deterministic $O(n\log n)$-competitive $(1+\varepsilon)$-resource augmented algorithm for fully dynamic capacitated online 2-recoloring that works for unweighted instances.
\end{restatable}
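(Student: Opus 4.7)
The plan is to reduce fully dynamic 2-recoloring to capacitated online bipartite 2-recoloring, solved by the $O(\log n)$-competitive $(1+\varepsilon)$-resource augmented algorithm $\mathcal{B}$ that is the main result preceding this theorem. The algorithm $\mathcal{A}$ proceeds in \emph{phases}. At the start of each phase, $\mathcal{A}$ resets to a canonical balanced coloring at cost at most $n$ and initializes a fresh copy of $\mathcal{B}$ from this coloring. Within a phase $p$, $\mathcal{A}$ maintains the accumulating subgraph $H_p$ of edges requested so far in the phase and routes each request through $\mathcal{B}$, following its recoloring decisions. The invariant is that $H_p$ stays bipartite throughout the phase. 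When a new request $(u_t, v_t)$ would introduce an odd cycle in $H_p$, the phase ends; $\mathcal{A}$ then begins a new phase with $(u_t, v_t)$ as its first request.

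First, I would bound the per-phase cost of $\mathcal{A}$ by $O(n \log n)$. The reset contributes at most $n$. During the phase, $\mathcal{B}$ sees only a bipartite accumulating graph, so its competitive guarantee applies and its cost is at most $O(\log n) \cdot OPT_{bip}(p)$, where $OPT_{bip}(p)$ is the optimal cost of maintaining a proper bipartite 2-coloring of $H_p$ throughout the phase starting from the reset coloring. I would bound $OPT_{bip}(p) \le n$ by exhibiting the trivial strategy that switches once to any valid bipartite coloring of $H_p$ (which exists and respects the $(1+\varepsilon)B$ algorithmic capacity, by the problem assumptions combined with augmentation) and then holds it fixed.

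The heart of the analysis is a charging argument showing the number of completed phases is at most $OPT_\sigma + 1$. If some completed phase $p$ had $OPT$ paying nothing within it, then $OPT$'s single coloring used throughout the phase would simultaneously satisfy every edge of $H_p$---i.e., it would be a proper 2-coloring of $H_p$---contradicting non-bipartiteness of $H_p$ at phase termination. Hence each completed phase charges at least one unit to $OPT_\sigma$. Summing the per-phase bound yields $\mathcal{A}_\sigma \le (OPT_\sigma + 1) \cdot O(n \log n) = O(n \log n)\cdot OPT_\sigma + O(n \log n)$, which is $O(n \log n)$-competitive with additive constant $\mu = O(n \log n)$, as allowed by the fully dynamic definition.

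The main obstacle I anticipate is handling phase transitions cleanly: ensuring that the reset (plus at most $O(1)$ extra swaps) satisfies the triggering edge within the $O(n)$ reset budget; confirming that $\mathcal{B}$ accepts arbitrary balanced initial colorings without degrading its $O(\log n)$ guarantee; and verifying that capacities never exceed $(1+\varepsilon)B$ during the reset itself, which requires a careful sequencing of the individual recoloring moves. These are technical issues that are orthogonal to the main reduction outlined above.
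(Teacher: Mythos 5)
Your plan---run the $O(\log n)$-competitive bipartite algorithm in phases, reset when an odd cycle would form, and charge each completed phase to $OPT$---captures part of the paper's structure (the paper is also phase-based and proves exactly the lemma that $OPT$ pays at least $1$ per phase). But there are two genuine gaps.

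\textbf{Missing phase-termination condition.} You end a phase only when $H_p$ would become non-bipartite, and you justify $OPT_{bip}(p)\le n$ by asserting that a valid bipartite coloring of $H_p$ respecting capacity exists ``by the problem assumptions combined with augmentation.'' This is false in the fully dynamic setting: there is no assumption that the accumulated graph is capacitated-colorable (indeed the $\Omega(n)$ lower bound instance is an odd cycle), and a bipartite $H_p$ can force a capacity violation. For example, a star $K_{1,m}$ with $m > (1+\varepsilon)n/2$ is bipartite but every proper $2$-coloring puts all $m$ leaves in one color, exceeding even the augmented capacity. In that case $OPT_{bip}(p)=\infty$, so the competitive guarantee of $\mathcal{B}$ gives no bound on $\mathcal{B}$'s cost, and your per-phase $O(n\log n)$ claim collapses. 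The paper's phase definition crucially has a \emph{second} termination condition: the phase also ends when no assignment of the bipartite components fits within capacity $(1+\varepsilon/2)n/2$ (checked via \texttt{Rebalance}). With that, the paper shows $OPT$ pays at least $1$ per phase for \emph{either} termination reason, using essentially your argument: a fixed $OPT$ coloring over the phase would be a proper capacitated coloring of $H_p$, contradicting infeasibility. You need to add this second termination condition and the corresponding charge to $OPT$.

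\textbf{Circularity and the need for a direct construction.} In this paper the dependency runs the other way: the $O(\log n)$-competitive online algorithm \texttt{Follow-Greedy} invokes the fully dynamic algorithm \texttt{Greedy-Recoloring} as a subroutine. So the ``main result preceding this theorem'' that you propose to use as $\mathcal{B}$ does not exist yet; using it here is circular. The paper instead gives a direct greedy construction within each phase: when two components merge, recolor the lighter one; if the lighter one is small and capacity would be violated, charge a global \texttt{Rebalance} of cost $O(n)$ to the $\Omega(\varepsilon n)$ vertices recolored since the last rebalance; if the lighter one is large, charge the rebalance to that component. Since each charged vertex's component at least doubles in size each time it is charged, the total charge per vertex is $O(\log n)$, giving $O(n\log n)$ per phase. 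This doubling-plus-rebalance-charging argument is the heart of the proof and has no analogue in your black-box reduction.

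One minor imprecision: you say $OPT$ paying nothing in phase $p$ ``contradicts non-bipartiteness of $H_p$ at phase termination,'' but $H_p$ (edges seen strictly within the phase) is bipartite; it is $H_p$ together with the triggering edge that contains the odd cycle. The corrected statement still yields $OPT$ paying at least $1$ somewhere in the phase or at its boundary, which suffices up to a constant factor.
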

We give a deterministic lower bound for fully dynamic $2$-recoloring that holds for an unweighted instance and \textit{unbounded} resource augmentation, establishing that our algorithm is tight within a $O(\log n)$ factor. The proof of Theorem \ref{lowerbdet} is deferred to Appendix \ref{app: lbdet}.
\begin{restatable}{theorem}{lowerbdet}
\label{lowerbdet}
The competitive ratio of any deterministic algorithm for capacitated fully dynamic $2$-recoloring is $\Omega(n)$, for any resource augmentation $\varepsilon>0$ and unweighted instances.
\end{restatable}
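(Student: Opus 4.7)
The plan is to adapt the MTS-on-odd-cycle lower bound used by Azar et al.~\cite{azar_et_al:LIPIcs.ICALP.2022.13} for uncapacitated fully dynamic 2-recoloring, verifying two things: (i) the offline capacity constraint $B=\lceil n/2\rceil$ does not meaningfully strengthen the offline, and (ii) unbounded resource augmentation for the online does not help it escape the per-step lower bound.

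The construction is an unweighted instance with vertex set $V=\{v_1,\ldots,v_n\}$ for odd $n$, where the adversary only issues requests from the edges of the cycle $C_n$ with edge set $\{(v_i,v_{i+1}):1\le i<n\}\cup\{(v_n,v_1)\}$. Because $C_n$ is not 2-colorable, every 2-coloring of $V$ contains at least one monochromatic cycle edge, regardless of how the online distributes its colors (and hence regardless of the online's capacity). The adversary repeatedly requests a monochromatic edge of the online's current coloring, forcing the online to recolor at least one vertex at each step; thus $\mathrm{ONL}_\sigma\ge T$ over $T$ rounds, establishing (ii).

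For the offline side, we embed the dynamics into MTS on the $n$-point cyclic metric. The \emph{minimum-conflict} 2-colorings of $C_n$ (those with exactly one monochromatic edge) each have color classes of sizes $\lfloor n/2\rfloor$ and $\lceil n/2\rceil$ and are therefore feasible under capacity $B$, giving (i); moreover, moving the unique monochromatic edge from cycle position $e_i$ to position $e_j$ requires flipping the $\min(|i-j|,n-|i-j|)$ consecutive vertices along the shorter arc, so Hamming distance restricted to the min-conflict colorings coincides with cyclic distance on $[n]$. Applying the standard deterministic $\Omega(n)$ MTS lower bound on this $n$-point metric (via the pigeonhole observation that in any adversary-generated sequence of $T$ forbidden states some state appears at most $T/n$ times, so an offline strategy staying near it pays only $O(T/n)$) yields $\mathrm{OPT}_\sigma=O(T/n)$ and hence competitive ratio $\Omega(n)$. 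The main obstacle is formalizing the reduction cleanly when the online uses multi-conflict colorings afforded by its larger capacity; this is handled because the per-step argument only requires the existence of a monochromatic cycle edge on an odd cycle, which remains true regardless of how the online partitions $V$ between the two colors.
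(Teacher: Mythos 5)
Your proposal is correct and follows essentially the same strategy as the paper: force the online algorithm's state onto an odd cycle so that every 2-coloring has a monochromatic cycle edge (hence one forced recoloring per request, regardless of augmentation), and show by an averaging/pigeonhole argument that some offline ``stay-put'' strategy keyed to a single monochromatic edge pays only $O(T/\text{cycle length})$ plus a bounded setup cost. The main cosmetic difference is the choice of cycle: you use the full $n$-cycle assuming $n$ odd, whereas the paper fixes an odd $\ell\approx n/2$ and puts the cycle on an $\ell$-subset, leaving $n-\ell$ free vertices (this is partly a parity fix, since the paper's convention takes $n$ even and $B=n/2$, under which no odd Hamiltonian cycle exists). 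Interestingly, your all-$n$-vertex variant actually makes the offline's capacity check slightly cleaner: flipping the single majority-colored endpoint of the monochromatic edge swaps the class sizes $\lceil n/2\rceil \leftrightarrow \lfloor n/2\rfloor$ and never breaches $B=\lceil n/2\rceil$. In the paper's subset construction the analogous single flip pushes one class to $n/2+1>B$, so the offline must pair each excursion flip with a free-vertex flip (cost $4$ per excursion rather than the paper's stated $2$)---immaterial to the $\Omega(n)$ bound but worth noting. Your appeal to an MTS lower bound is really just a conceptual framing: the argument you actually give (pick the least-requested edge, stay near it) is the same direct averaging argument the paper carries out with its $\ell$ offline algorithms $OFF_1,\dots,OFF_\ell$, and you correctly observe that the online lower bound does not use MTS machinery at all, only the parity obstruction to 2-coloring an odd cycle. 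One small point you should tighten: you should state the setup cost (the paper charges up to $\ell^2$ total across all candidate offlines to reach their home colorings) so that the final ratio $\Omega(n)$ is established only once $|\sigma|$ is long enough, e.g.\ $|\sigma|\ge \ell^2$.
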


\noindent\underline{\textit{Online Recoloring}:}
Our main result is an asymptotically \textit{optimal} deterministic $O(\log n)$-competitive algorithm for capacitated online $2$-recoloring which works for \emph{weighted} instances.

\begin{restatable}{theorem}{logncompetitive}\label{thm: logncompetitive}
There exists an optimal deterministic $O(\log n)$-competitive $(1+\varepsilon)$-resource augmented algorithm for capacitated online $2$-recoloring that works for weighted instances.
\end{restatable}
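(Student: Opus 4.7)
My plan starts from the observation that the problem can be recast as maintaining online orientations of bipartite connected components. At any time $t$, $G_t$ is bipartite, so it decomposes into connected components $C_1,\dots,C_k$, each with a unique bipartition $(A_j,B_j)$, and a valid 2-coloring of $G_t$ is equivalent to a choice of orientation $\sigma_j\in\{\pm 1\}$ for each component (specifying which side gets color $1$). Writing $x_j=w(A_j)-w(B_j)$, the capacity constraint for our algorithm is $|\sum_j\sigma_j x_j|\le 2\varepsilon B$, whereas OPT must maintain the stricter equality $|\sum_j\sigma_j^{\mathrm{OPT}} x_j|=0$ (since $B=\tfrac{1}{2}\sum_v w(v)$). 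The only event that can force a recoloring is a merge caused by an edge that bridges two distinct components $C_p,C_q$: an edge arriving inside a component is automatically properly colored. A merge admits two orientations for the new component, with ``base'' flip costs $w(C_p)$ and $w(C_q)$ respectively, and at any point the algorithm may additionally flip any subset of the other components at a cost equal to the sum of their weights.

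\textbf{Algorithm.} I would build on the uncapacitated flip-the-smaller-component rule of Azar et al. For each merge of $C_p,C_q$, the algorithm first considers the smaller-flip orientation (cost $\min(w(C_p),w(C_q))$); if this keeps the new imbalance within $2\varepsilon B$, it commits. If only the larger-flip orientation is feasible, the algorithm accepts the extra cost $|w(C_p)-w(C_q)|$ as ``overhead'' to be charged to OPT. If neither orientation alone is feasible, it enters a \emph{rebalancing} stage in which, in addition to the cheaper merge flip, it greedily flips a subset of currently existing components chosen by smallest weight-to-imbalance-reduction ratio, until the imbalance again falls below $2\varepsilon B$. To keep rebalancing infrequent, I would enforce the tighter working invariant $|\sum_j\sigma_j x_j|\le\varepsilon B$, using only half the slack for ``free drift'' from merges and reserving the remaining $\varepsilon B$ as a safety buffer.

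\textbf{Analysis and main obstacle.} I would decompose the algorithm's total cost into $(i)$ the base merge cost (the cheaper of the two flips at every merge) and $(ii)$ the overhead cost from forced larger flips and rebalancings. For $(i)$, a potential function $\Phi_1=\sum_C w(C)\log(w(V)/w(C))$ yields, as in Azar et al., the bound that each vertex can lie on the smaller side of a merge $O(\log n)$ times because its component at least doubles in weight; combined with a charging of merges to OPT's unavoidable flips, this gives an $O(\log n)$-competitive contribution for $(i)$. For $(ii)$, the key idea is that OPT's equality constraint is much more fragile than ours: whenever our algorithm is forced into a non-smaller flip or a rebalancing, the merge has shifted the $x_j$'s in such a way that OPT too must have re-oriented one or more components within a bounded window, so a coupling argument pairs each unit of our overhead with a chargeable unit of OPT cost. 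The main obstacle---and the place where the $(1+\varepsilon)$-augmentation is truly essential---is the rebalancing cost: a single rebalance can cascade through many small components, and showing that the greedy best-ratio choice is $O(\log n)$-competitive with the optimal rebalancing subset requires a careful primal-dual or amortized argument in which the $\varepsilon B$ slack acts as the amortization budget to absorb an occasional expensive cascade.
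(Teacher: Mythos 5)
Your reframing of the problem as maintaining signed orientations of bipartite components, with imbalances $x_j = w(A_j)-w(B_j)$ and slack $2\varepsilon B$ versus OPT's equality constraint, is a reasonable first step and matches the paper's component-based bookkeeping. However, the proposal does not prove the theorem, and the place where it stalls---the rebalancing analysis---is precisely where the paper makes a genuinely different move.

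\paragraph*{The central gap.} You explicitly flag that a single forced rebalance ``can cascade through many small components'' and that bounding the greedy ratio-based rebalancing against the optimal subset ``requires a careful primal-dual or amortized argument.'' This is not a detail to be filled in later: without it the claimed $O(\log n)$ bound does not follow. The paper avoids this difficulty entirely by never attempting to rebalance incrementally. Instead, \texttt{Follow-Greedy} detects the moment when a recoloring (either a smaller-side flip or a periodic re-optimization) would violate the $(1+\varepsilon)W$ capacity, and Lemma~\ref{lemma: optlbW} shows that at this point \emph{already} $OPT_{\sigma} = \Omega(W)$: since the algorithm has been tracking colorings that are within $\frac{\varepsilon}{4}w(P_i)$ of per-component optima (Lemma~\ref{lemma: deviationofcoloring}), any capacity overflow of $\varepsilon W$ certifies that a weight-$\Omega(W)$ set of vertices must change color under \emph{any} proper coloring. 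Once $OPT_{\sigma}=\Omega(W)$ is known, the algorithm switches permanently to the phase-based fully dynamic algorithm \texttt{Greedy-Recoloring}, whose cost on the remaining request sequence is bounded by $O(W\log n)$ (Lemma~\ref{lemma:costofgreedy}). This two-regime structure---charge against per-component optima while capacities are loose, then fall back to an $O(W\log n)$-cost procedure once $OPT=\Omega(W)$ is certified---replaces the amortized rebalancing argument you were hoping to supply.

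\paragraph*{A second, subtler issue.} Your handling of the ``base'' merge cost (item (i)) is also not quite right as stated. The doubling argument bounds the total flip-the-smaller cost by $O(W\log n)$, but to convert this into a bound of $O(OPT_{\sigma}\log n)$ you need to charge each flip to a lower bound on OPT's cost, and OPT's cost on a component $P$ is the weighted Hamming distance of its final coloring from the \emph{initial} coloring $c_0$ restricted to $P$---not ``unavoidable flips.'' Since OPT may already agree with $c_0$ on much of $P$, a naive charge does not work. The paper resolves this with the \emph{monotonicity} lemma (Lemma~\ref{lemma: monotonicity}): as a component grows, $d_P(c_m)$---the distance of $P$'s optimal coloring from $c_0$---can only increase, so each re-optimization event (triggered only after a $(1+\frac{\varepsilon}{4})$-factor weight increase) can be charged to a fresh $\Omega(w(P))$ lower bound on OPT, and between re-optimizations only the smaller component is touched. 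Your proposal uses none of this machinery; the ``coupling argument'' sketched for item (ii) reasons about OPT ``re-orienting components within a bounded window,'' which does not match the offline nature of OPT (OPT pays once for its final deviation from $c_0$ and has no temporal window). Without the monotonicity lemma or a substitute, the per-vertex $O(\log n)$ bound gives only $O(W\log n)$ absolute cost, not $O(\log n)$-competitiveness.
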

We complement our result giving a lower bound for the capacitated case similar to the uncapacitated case in \cite{azar_et_al:LIPIcs.ICALP.2022.13}, which holds for an \textit{unbounded} amount of resource augmentation, $\varepsilon>0$. The proof of Theorem \ref{lowerbrand} is deferred to Appendix \ref{app: randlognlb}.
\begin{restatable}{theorem}{lowerbrand}
\label{lowerbrand}
The competitive ratio of any randomized algorithm for capacitated online $2$-recoloring is $\Omega(\log n)$, which holds for an arbitrary amount of resource augmentation $\varepsilon>0$.
\end{restatable}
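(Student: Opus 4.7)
The plan is to transfer the $\Omega(\log n)$ randomized lower bound for \emph{uncapacitated} online $2$-recoloring of Azar et al.~\cite{azar_et_al:LIPIcs.ICALP.2022.13} directly to the capacitated setting with arbitrary resource augmentation $\varepsilon>0$. The key observation is that the capacitated problem (algorithm capacity $(1+\varepsilon)B$, offline capacity $B$) is, on the algorithm's side, a \emph{restriction} of the uncapacitated problem, while on the offline side $OPT$ is unaffected provided the hard instance admits a balanced optimal $2$-coloring. Hence no upper bound on $\varepsilon$ is ever needed in the argument.

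First, I would invoke the Azar et al.\ construction, which exhibits a distribution $\mathcal{D}$ over unweighted request sequences on $n$ vertices inducing bipartite graphs, with the property that any deterministic online algorithm has expected cost $\Omega(\log n)\cdot OPT_\sigma$. I would verify that on every $\sigma$ in the support of $\mathcal{D}$, the offline optimum is realized by a proper $2$-coloring whose two color classes have size at most $\lceil n/2\rceil$; this either follows from the symmetry of their adversary (edges revealed on balanced paths/cycles that end at a balanced bipartition) or can be enforced with a trivial padding of isolated ``dummy'' vertices that never appear in any edge and can be split evenly between the two colors. Setting per-color capacity $B=\lceil n/2\rceil=\lceil\sum_{v}w(v)/2\rceil$, this balanced offline coloring satisfies the cap-$B$ constraint at every time step, so $OPT_{\text{cap},\sigma}=OPT_\sigma$.

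Second, I would observe that any randomized capacitated algorithm $\mathcal{A}$ obeying capacity $(1+\varepsilon)B$ is, by construction, a legitimate randomized \emph{uncapacitated} algorithm that simply chooses to keep its color classes bounded by $(1+\varepsilon)B$. Applying the Azar et al.\ lower bound (via Yao's principle on $\mathcal{D}$) to this particular $\mathcal{A}$ then gives $\mathbb{E}[\mathcal{A}_\sigma]=\Omega(\log n)\cdot OPT_\sigma=\Omega(\log n)\cdot OPT_{\text{cap},\sigma}$, which is exactly the claimed bound. The main obstacle is the technical check in the first step: confirming that the hard distribution produces instances whose offline optimum is balanced (so that imposing the capacity $B$ on the offline side costs nothing). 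Once this is established --- directly from the symmetry of their construction or, failing that, by the isolated-vertex padding above --- the reduction is immediate and yields the lower bound for every $\varepsilon>0$, matching the statement.
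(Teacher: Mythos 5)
Your proposal is correct and essentially matches the paper's argument: both rely on the Azar-style batched pairing of even-length path components, whose proper colorings are always exactly balanced, so that $OPT$ already respects capacity $B=n/2$ and resource augmentation gives the online algorithm no usable slack. The paper presents the construction explicitly and observes that ``any algorithm is forced to keep exactly $n/2$ nodes of each color''; your framing of this as a black-box reduction (a capacitated algorithm is a restriction of an uncapacitated one, so the $\Omega(\log n)$ uncapacitated lower bound transfers once $OPT_{\mathrm{cap}}=OPT$) is the same observation stated more modularly, and the balance property you flag for verification indeed holds because every component in the hard distribution is an even-length path.
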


\subparagraph{Capacitated \texorpdfstring{\boldmath $\Delta$}{Delta}-recoloring.}
We introduce the \emph{$(1 + \varepsilon)$-overprovisioned} setting for capacitated $\Delta$-recoloring, and give optimal deterministic and randomized algorithms. In this setting, there are $n$ vertices, and $\Delta$ colors. Each color is assigned exactly $\frac{n}{\Delta}$ vertices initially. The maximum degree of the graph induced by the request sequence is bounded by $(1-\varepsilon)\Delta$, and the total capacity per color is $(1+\varepsilon)\frac{n}{\Delta}$. The request sequence is finite. In contrast to the resource augmented setting for which the competitiveness is analyzed with respect to an unaugmented optimal offline algorithm, we analyze competitiveness with respect to an optimal offline algorithm with the same number of colors and unrestricted capacity per color.

We first note that a lower bound of $\Delta$ holds for any deterministic algorithm in this setting. The instance is similar to the one given for uncapacitated $(\Delta+1)$-recoloring given in \cite{azar_et_al:LIPIcs.ICALP.2022.13}. Thus, deterministic $(1+\varepsilon)$-overprovisioned $\Delta$-recoloring is as hard as deterministic $(\Delta+1)$ recoloring. The proof of Theorem \ref{lowerbdelta} is deferred Appendix \ref{app: lbdelta}.

\begin{restatable}{theorem}{lowerbdelta}
\label{lowerbdelta}
The competitive ratio of any deterministic algorithm for $(1+\varepsilon$)-overprovisioned $\Delta$-recoloring is $\Omega(\Delta)$.
\end{restatable}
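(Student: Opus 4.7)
The plan is to adapt the classical uncapacitated $(\Delta+1)$-recoloring lower bound of \cite{azar_et_al:LIPIcs.ICALP.2022.13} to the $(1+\varepsilon)$-overprovisioned $\Delta$-recoloring setting. The key observation driving the adaptation is that $\Delta$ colors with maximum degree $(1-\varepsilon)\Delta$ leaves $\varepsilon\Delta$ ``slack'' colors available at every vertex, which plays exactly the role that the one extra color plays in the classical construction. Moreover, neither the capacity augmentation on the online algorithm nor OPT's unrestricted capacity interferes, since the hard instance exhibits a purely local, single-vertex conflict pattern and the gadget vertices I will introduce each have degree one.

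Concretely, I would take an unweighted instance with $n = \Theta(\Delta^2)$ vertices, initially split evenly among the $\Delta$ colors, and designate a target vertex $v$. For each color $i \in [\Delta]$ I reserve a pool $U_i$ of $T := \lfloor (1-\varepsilon)\Delta \rfloor$ fresh vertices of initial color $i$; this is feasible since $n/\Delta \geq T+1$. The adversary then runs for $T$ rounds: in round $j$, it reads the current color $c_j$ of $v$ in the algorithm's coloring, picks an as-yet-unused $u_j \in U_{c_j}$, and requests the edge $(v, u_j)$. The degree bound is respected because $\deg(v) = T \leq (1-\varepsilon)\Delta$ and every $u_j$ has degree one.

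For the algorithmic lower bound, I would argue that each request immediately creates a monochromatic edge at $(v, u_j)$, so restoring properness forces at least one unit-cost recoloring per round; preemptive moves of $v$ between rounds also cost a unit each and cannot help, because the adversary always reads $v$'s latest color before choosing the next edge. For OPT, I would set $C := \{c_1, \dots, c_T\}$, observe that $|C| \leq T < \Delta$, pick any $c^* \in [\Delta] \setminus C$, and have OPT recolor $v$ once to $c^*$ upfront; since each $u_j$ retains its initial color $c_j \neq c^*$, every edge is satisfied with total cost one. The ratio $T / 1 = \Omega(\Delta)$ then follows. The only step that will require some care is ruling out clever algorithmic evasions (cascading recolorings, preemptive moves of gadget vertices, etc.), but since every round introduces a brand-new conflict edge whose repair costs at least one unit of recoloring regardless of which endpoint or chain the algorithm chooses, no such strategy can beat the $\Omega(\Delta)$ bound.
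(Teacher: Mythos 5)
Your construction is correct in spirit but takes a genuinely different route from the paper's. The paper follows its own Theorem~4.8-style argument from the capacitated lower bounds of Azar et al.: it maintains a rotating set $S$ of $\Delta+1$ vertices, repeatedly requests a monochromatic pair inside $S$ (such a pair always exists with only $\Delta$ colors), swaps vertices out of $S$ once they hit degree $(1-\varepsilon)\Delta$, and repeats the process $\Theta(n)$ times to force the online algorithm to pay $\Omega(n\Delta)$; OPT then pays only $O(n)$ by jumping once to an equitable $\Delta$-coloring of the final graph, invoking Kierstead et al. Your proof instead ports the classic single-pivot construction from uncapacitated $(\Delta+1)$-recoloring: one target vertex, per-color gadget pools, $T = \lfloor(1-\varepsilon)\Delta\rfloor$ rounds, OPT pays $1$. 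Your observation that the $\varepsilon\Delta$ color slack plays the role of the ``$+1$'' is exactly the right insight, and the resulting argument is more direct and local than the paper's. The trade-offs: the paper's instance works for any $n$ that is a multiple of $\Delta$ and not too small, whereas yours implicitly needs $n = \Omega(\Delta^2)$ to build the pools; your OPT bound is $1$ rather than $O(n)$, which is simpler but neither approach gains anything from this.

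The one place where your write-up has a genuine gap is the pool size. You set $|U_i| = T$ and claim feasibility from $n/\Delta \ge T+1$, but you do not rule out the algorithm \emph{preemptively recoloring the gadget vertices themselves} so that, in some round $j$, every as-yet-unused vertex of $U_{c_j}$ no longer carries its initial color $c_j$ and the adversary cannot create a monochromatic edge. You flag this concern in the last sentence but dismiss it on the grounds that each new request still costs a unit; that reasoning misses that the danger is not in the cost of repair but in the adversary being unable to \emph{issue} the request at all. The fix is quantitative: take $|U_i| \ge 2T$. Then, in round $j$, at most $j-1 < T$ vertices of $U_{c_j}$ are already neighbors of $v$, and if the algorithm has made fewer than $T$ recolorings so far at most $T-1$ gadgets anywhere have been moved off their initial color, so at least one valid $u_j$ remains; if instead the algorithm has already made $T$ recolorings, it has already paid $\Omega(\Delta)$. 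Either way the bound goes through. With $n = \Theta(\Delta^2)$ the enlarged pools are still feasible, and the rest of your argument — including the $\deg(v) = T \le (1-\varepsilon)\Delta$ check and OPT's one-move strategy — is sound.
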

We give an optimal deterministic algorithm matching the above lower bound. 

\begin{restatable}{theorem}{deltaOne}
\label{thm:delta-1}\sloppy
There exists a $O(\Delta)$-competitive deterministic algorithm for $(1+\varepsilon$)-overprovisioned $\Delta$-recoloring when $\Delta \le \eps n/2$.
\end{restatable}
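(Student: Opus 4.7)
The plan is to design a deterministic algorithm that, on each monochromatic edge, performs at most a constant number of recolorings, and to amortize these against $\mathrm{OPT}$'s recolorings via a potential-function argument that loses a factor of $O(\Delta)$. When a monochromatic edge $(u,v)$ with common color $c_0$ arrives, let $A(u) := [\Delta] \setminus \{c(w) : w \in N(u)\}$; by the max-degree bound $(1-\varepsilon)\Delta$ we have $|A(u)| \geq \varepsilon \Delta$, and similarly for $A(v)$. The algorithm first tries to move $u$ to a color in $A(u)$ with positive spare capacity. If every color in $A(u)$ happens to be full (which is not ruled out by the constraints when $\varepsilon$ is small), the algorithm performs a short \emph{exchange}: pick $c' \in A(u)$, find a vertex $w$ currently colored $c'$ whose allowed set $A(w)$ contains a color $c^*$ with spare capacity, then move $w$ to $c^*$ and $u$ to $c'$. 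Existence of such a pair $(w, c^*)$ will be established by double-counting the $\geq \varepsilon \Delta$ allowed colors against the globally $\varepsilon n$ units of spare capacity, crucially using the assumption $\Delta \leq \varepsilon n / 2$ to guarantee that color classes contain enough vertices (at least $n/\Delta \geq 2/\varepsilon$ each) for a suitable candidate to exist.

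For the competitive analysis, I will follow the amortized scheme that Azar et al.\ use for uncapacitated $(\Delta+1)$-recoloring. Define a potential $\Phi$ that counts the vertices on which the algorithm's coloring disagrees with a reference optimal offline coloring. At every monochromatic edge the algorithm encounters, at least one endpoint must differ from $\mathrm{OPT}$, so $\Phi \geq 1$ at that moment. The amortization argument will then show that between any two consecutive $\mathrm{OPT}$ recolorings of a given vertex $v$, the algorithm can recolor $v$ at most $O(\Delta)$ times before it exhausts the candidate color space, after which any further move must be paid for by a new $\mathrm{OPT}$ adjustment. Summing across vertices gives $\mathrm{ALG} \leq O(\Delta) \cdot \mathrm{OPT}$.

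The main obstacle will be weaving the exchange step into the amortized analysis without inflating the bound: a swap affects two vertices and two color classes, so the potential must absorb its extra displacement, and a refined invariant on color-class sizes will be needed to ensure that short-chain existence persists over adversarial sequences rather than holding only in isolation. This is precisely where $\Delta \leq \varepsilon n / 2$ enters in a sharp way: it provides exactly the per-color slack (at least $2/\varepsilon$ vertices per color) that the counting argument requires to go through robustly as the coloring evolves, and it is plausibly also the threshold at which the lower bound instance of Theorem~\ref{lowerbdelta} becomes realizable, making the condition essentially tight for the matching $\Theta(\Delta)$ bound.
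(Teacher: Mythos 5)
Your proposal takes a genuinely different route from the paper, but it has a real gap that you yourself flag without resolving, and the missing piece is exactly where the difficulty lies.

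\emph{What the paper actually does.} The paper does not use a potential against a reference optimal coloring and it does not perform local exchange chains. Instead it (i) maintains online a $2$-approximate vertex cover $C$ of $G_M$, the graph of edges that were monochromatic under the \emph{initial} coloring $c_0$, giving the lower bound $OPT_\sigma \ge |C^*|$ (Lemma~\ref{lemma: optdelta-lb}); (ii) \emph{only ever recolors vertices in $C$}, and each recolor picks the feasible color with maximum residual capacity; and (iii) when a vertex's every feasible color is full, it does a one-shot global rebalance using Kierstead et al.'s equitable-coloring algorithm at cost $O(n)$. The competitive ratio falls out of two observations: if rebalance is never triggered, each new edge causes at most one recoloring and all recolored vertices lie in $C$, so the total cost is $O(|C| \cdot \Delta) = O(|C^*| \Delta)$; if rebalance is triggered, the $\varepsilon\Delta$ full colors must each have absorbed roughly $\varepsilon n/\Delta$ extra vertices, all from $C$, hence $|C^*| \ge \varepsilon\Delta(\varepsilon n/\Delta - 1) \ge \varepsilon^2 n/2$, and the condition $\Delta \le \varepsilon n/2$ is used precisely here to make that subtraction harmless. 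Rebalancing can then fire $O(\Delta)$ times total (bounded by the total degree sum $n\Delta(1-\varepsilon)$ divided by $\varepsilon^2 n$), for $O(n\Delta)$ rebalancing cost against $OPT = \Omega(n)$.

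\emph{The gap in your proposal.} Two of your key steps are asserted but not argued, and they are not minor. First, the claim ``between any two consecutive $OPT$ recolorings of a given vertex $v$, the algorithm can recolor $v$ at most $O(\Delta)$ times before it exhausts the candidate color space'' is not true as stated in the capacitated setting: the candidate color space $L(v)$ does not monotonically shrink because capacities free up elsewhere when other vertices move, so you cannot simply count ``exhaustion'' of colors. The paper avoids this entirely by attributing the per-vertex cost to the degree (each edge fires once) rather than to a shrinking candidate set. Second, you explicitly identify the exchange-step amortization as ``the main obstacle'' and leave it open; a single swap displaces a non-cover vertex $w$, which breaks the paper's crucial invariant that only vertices in $C$ are ever recolored, and hence breaks the $|C^*| \ge \varepsilon^2 n/2$ lower bound that pays for a stuck configuration. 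Without that invariant, the $\Omega(n)$ lower bound on $OPT$ at a stuck step has no proof, and the $O(\Delta)$ bound on the number of rebalance/swap events is unsupported. The existence argument for a one-step chain (double-counting slack against allowed colors) is plausible but, as you note, must be shown to persist as the coloring evolves adversarially, and you have not shown that a one-step chain always suffices (rather than longer alternating paths), nor that repeated swaps do not spiral. In short, the vertex-cover/only-recolor-$C$ invariant is the load-bearing idea in the paper's proof, and your proposal drops it; the potential-against-$OPT$ replacement is not developed far enough to carry the argument.
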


Our main result for $\Delta$-recoloring is an optimal $O(1)$-competitive randomized algorithm. 

\begin{restatable}{theorem}{deltaTwo}
\label{thm:delta-2}
There exists a $O(1)$-competitive randomized algorithm for $(1+\varepsilon$)-overpro\-visioned $\Delta$-recoloring that works against an oblivious adversary when $\Delta=O\left(\sqrt{\frac{n}{\log n}}\right)$.
\end{restatable}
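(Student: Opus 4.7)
The plan is to analyze the following \emph{lazy random} algorithm: maintain the current coloring $c$ (initialized to $c_0$). Upon a request $(u_t, v_t)$ with $c(u_t) = c(v_t)$, flip a fair coin to choose $w \in \{u_t, v_t\}$, compute the set $A_w$ of colors that are neither present in $w$'s current neighborhood in $G_t$ nor saturated at the capacity $(1+\varepsilon)\tfrac{n}{\Delta}$, and recolor $w$ to a color drawn uniformly at random from $A_w$. Since the maximum degree is $(1-\varepsilon)\Delta$, at least $\varepsilon\Delta$ colors avoid $w$'s neighborhood, so if the color counts remain approximately balanced then $|A_w| = \Theta(\varepsilon\Delta)$.

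To bound the expected number of recolorings, I would fix an offline valid $\Delta$-coloring $\pi$ of the final graph $G_T$ (which exists because $\Delta(G_T) \le (1-\varepsilon)\Delta < \Delta$) and couple the analysis to $\pi$. The key observation is that against an oblivious adversary each random color drawn by the algorithm is independent of $\pi$. Defining the potential $\Phi_t = |\{v : c_t(v) \ne \pi(v)\}|$, I would argue that (i) every conflict faced by the algorithm forces $\Phi_t \ge 1$, since $\pi(u_t) \ne \pi(v_t)$; (ii) each random recoloring of $w$ matches $\pi(w)$ with probability $\Omega(1/\Delta)$; and (iii) disagreements spread only through algorithm or OPT recolorings. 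Viewing disagreements as a branching process seeded by OPT's recolorings and the initial disagreements $c_0 \ne \pi$, I would show through an amortized charging that each seed contributes in expectation at most $f(\varepsilon)$ algorithm recolorings in its descendant tree, yielding the $O(1)$-competitive bound with constant depending only on $\varepsilon$.

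For the capacity constraint, I would appeal to concentration. Conditional on the competitive bound, the total number of algorithm recolorings $R$ is $O(\mathrm{OPT}) \le O(n)$ with high probability, and the count of vertices currently assigned color $i$ evolves as a $\pm 1$ process whose incoming color is uniform over $\Theta(\Delta)$ candidates. A standard martingale Azuma/Chernoff bound, followed by a union bound over the $\Delta$ colors and $O(n^2)$ time steps, gives a maximum deviation of $O(\sqrt{n \log n})$ from the initial counts $n/\Delta$ with probability $1 - o(1)$. The regime $\Delta = O(\sqrt{n/\log n})$ is precisely what makes this deviation fit inside the overprovisioning slack $\varepsilon n / \Delta = \Omega(\sqrt{n \log n})$, so capacity is respected throughout.

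The hardest part will be the circular dependency between the two arguments: the competitive analysis assumes $|A_w| = \Omega(\varepsilon\Delta)$ at every recoloring step, which requires capacity to hold; yet the capacity argument relies on the competitive bound to limit $R$. My plan for untangling this is to introduce an \emph{idealized} variant of the algorithm that ignores capacity and samples uniformly from all $\Omega(\varepsilon\Delta)$ colors outside $w$'s neighborhood, prove the potential/branching competitive bound for the idealized variant, then use the concentration inequality to establish that the event "the idealized and real algorithms diverge" has probability $o(1)$, so that the competitive guarantee transfers back to the real algorithm.
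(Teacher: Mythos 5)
Your algorithm and analysis diverge substantially from the paper's. The paper maintains an online $2$-approximate vertex cover $C$ of the graph $G_M$ of edges monochromatic under $c_0$, and \emph{only ever recolors vertices in $C$} (picking the higher-degree endpoint when both are in $C$); you instead flip a fair coin between the two endpoints. The paper lower-bounds $OPT$ by $|C^*|$, the minimum vertex cover of $G_M$, and the competitive analysis is then a direct counting argument: every recoloring event is an edge incident to $C$, there are at most $|C|\Delta(1-\varepsilon)$ such edges, and after the first recoloring each causes a conflict with probability at most $\tfrac{1}{\varepsilon\Delta}$, giving $O(|C|/\varepsilon)=O(|C^*|)$ expected recolorings. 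Your substitute for this — a potential against a fixed offline coloring $\pi$ together with a branching process — is the point at which I think the plan breaks.

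The branching process is not obviously subcritical. When your algorithm recolors an arbitrary vertex $w$ (which may be a vertex that $\pi$ never touches), that recoloring can trigger conflicts with up to $\Delta(1-\varepsilon)$ future neighbors of $w$, each with probability up to $\tfrac{1}{\varepsilon\Delta}$, i.e.\ an expected $\tfrac{1-\varepsilon}{\varepsilon}$ new conflicts, each of which spawns another recoloring. For $\varepsilon<\tfrac12$ this branching factor exceeds $1$, so the claim that "each seed contributes at most $f(\varepsilon)$ algorithm recolorings" is unsupported; the total universe of potentially recolored vertices in your scheme is all of $V$, not $C$, and a careless cascade could cost $\Theta(n)$ recolorings even when $OPT$ is tiny. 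There are two further gaps: (i) your \texttt{Recolor} samples from $A_w$, the non-saturated feasible colors, but you never say what to do if $A_w=\emptyset$ — the paper explicitly invokes a randomized \texttt{Rebalance} there, and the phase structure built around rebalancing is what makes the capacity concentration go through cleanly (Lemmas on load and phase length); (ii) your "idealized variant" fix for the circular dependency needs a high-probability (not merely in-expectation) bound on the total number of recolorings, but your competitive argument is stated only in expectation, so the transfer back to the real algorithm is not established. The martingale/union-bound intuition for the $\Delta=O(\sqrt{n/\log n})$ threshold is the one part that does align with the paper.
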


\subsection{Technical Overview}
\subparagraph{Algorithms for 2-Recoloring.} Our algorithms maintain a set of bipartite components induced by the request sequence. We develop a subroutine $\texttt{Rebalance}$ which is a FPTAS, to periodically assign components while respecting capacity constraints.

Our fully dynamic algorithm is a phase-based algorithm where the graph induced by requests in any phase is bipartite and admits a coloring such that the total weight of vertices assigned any color is $W=(1+\varepsilon)B$. During a phase, our algorithm works as follows: on a request $(u,v)$ where $u$ is in component $P_1$ and $v$ in $P_2$, s.t. $P_1$ is the heavier of the two components, $P_2$ is recolored and merged to $P_1$. If recoloring of $P_2$ leads to capacity violation, \texttt{Rebalance} is invoked. The total recoloring cost charged to a vertex per phase is bounded by separately considering when it is part of a heavy or light component. We observe that whenever recoloring of a light component leads to a call to $\texttt{Rebalance}$, vertices in light components of total weight at least $\Omega(W)$ must have been recolored since the last call to $\texttt{Rebalance}$ (if any). On the other hand, when two heavy components are merged, $\texttt{Rebalance}$ is always invoked; while a single call to \texttt{Rebalance} incurs $O(W)$ cost, the number of such calls are only $O(1)$ per phase. Combined with a charging argument which bounds the number of times any vertex can be recolored in a phase by $O(\log n)$, we obtain a $O(n\log n)$-competitive algorithm for fully dynamic recoloring. We complement this result with a lower bound of $\Omega(n)$ which holds for an unbounded amount of augmentation, establishing that our algorithm is tight to within a $O(\log n)$ factor.

Our deterministic $O(\log n)$-competitive algorithm uses the fully dynamic algorithm as a subroutine once a lower bound of $\Omega(W)$ on $OPT_{\sigma}$ has been determined. Our algorithm incorporates \textit{laziness} together with the \textit{greedy} approach. While it is crucial to maintain a coloring that is \textit{close} to the initial coloring for any component to be competitive, we recolor a component according to an optimal coloring only after a constant factor increase in its weight, until it is possibly merged to a heavier component. Whenever a light component $P_2$ is merged to a heavy component $P_1$ such that the weight of $P_1$ does not change significantly, only $P_2$ is recolored (if necessary). A crucial property we exploit in the analysis is the monotonicity of the distance to initial coloring with respect to the weight of any component. As a result, the cost of periodically re-coloring components to mimic an optimal coloring after weight increases can be charged to $OPT_{\sigma}$. An intricate charging argument allows us to bound the cost of both periodic recolorings and greedy recolorings by $OPT_{\sigma}O(\log n)$.

If the request sequence precludes optimal colorings without significant rebalancing (i.e. when a lower bound of $OPT_{\sigma}=\Omega(W)$ is determined), our algorithm transitions to the fully dynamic algorithm. The total cost of the fully dynamic algorithm for the remaining request sequence is bounded by $O(W\log n)$. Note that our fully dynamic algorithm is $O(n\log n)$-competitive. Our analysis reveals that in the case when $OPT_{\sigma}=\Omega(W)$, the fully dynamic algorithm yields $O(\log n)$ competitiveness on \textit{weighted} instances. We give a \emph{randomized} lower bound of $O(\log n)$ for online recoloring which holds for an algorithm with \emph{unbounded} resource augmentation, establishing that our algorithm for online $2$-recoloring is \textit{asymptotically optimal}.

\subparagraph{Algorithms for \texorpdfstring{\boldmath $\Delta$}{Delta}-Recoloring.}
We give \textit{asymptotically optimal} deterministic and randomized algorithms for $\Delta$-recoloring in the $(1+\varepsilon)$-overprovisioned setting. Both algorithms consist of a \textit{recoloring} subroutine that colors an individual vertex and a \textit{rebalancing} subroutine when any color reaches capacity. The implementations of these subroutines are different in the two algorithms. Similar to the approach in \cite{azar_et_al:LIPIcs.ICALP.2022.13}, we lower bound $OPT_{\sigma}$ as the size of the minimum vertex cover on the graph induced by the set of monochromatic edges in $\sigma$ with respect to the initial coloring. Both algorithms maintain a $2$-approximate vertex cover $C$. If the requested edge is monochromatic, both algorithms recolor only the endpoint in $C$. 

Our deterministic algorithm picks a feasible color assigned to the least number of vertices, ensuring that whenever all feasible colors are \textit{full} for a vertex, the rebalancing cost of $O(n)$ can be charged to the increase in vertex degrees. To rebalance, we utilize an algorithm for equitable coloring of graphs with bounded maximum degree by Kierstead et al. \cite{kierstead2010fast}. We complement our results by giving a $O(\Delta)$ lower bound, similar to an instance in \cite{azar_et_al:LIPIcs.ICALP.2022.13}.

Our $O(1)$-competitive randomized algorithm uses a simple rebalancing procedure which sequentially colors vertices randomly among their feasible colors, where a color $c$ is feasible for a vertex $v$ if none of $v$'s neighbors are assigned $c$. To implement the recoloring subroutine for $v$, a random feasible color for $v$ is picked. While our randomized algorithm is simple, we face multiple technical challenges towards obtaining competitiveness since straightforward concentration inequalities cannot be applied, because recolorings (and rebalancings) involve \textit{dependent} events. For instance, our analysis of rebalancing utilizes a martingale to derive a concentration bound on the load of any color. We also invoke stochastic domination arguments to upper bound the load on any color due to recoloring, and lower bound the number of recolorings performed before any color is overloaded. \subsection{Open Problems} While we obtain tight bounds for several variants of capacitated recoloring, our work leaves several open problems. Obtaining competitive algorithms for fully dynamic 2-recoloring, and $\Delta$-recoloring in the weighted setting remains open. Bridging the $\Theta(\log n)$ gap between the upper and lower bounds for fully dynamic 2-recoloring is a natural question. The case of fully dynamic $\Delta$-recoloring is wide open, even in the uncapacitated setting. Another direction is the case of non-uniform capacities; we believe that Theorem \ref{thm: logncompetitive} extends to this case. 

The overprovisioned model for $\Delta$-recoloring allows for stronger results than the resource augmented model in that the offline algorithm is permitted to utilize the same resources. Our current results assume over-provisioning in colors (with respect to the maximum degree) and capacity per color; it would be insightful to obtain results where either the capacity, or the number of colors is overprovisioned. Also, whether the $O(\log n)$-competitive 2-recoloring result generalizes to the overprovisioned model would be interesting to resolve. Finally, can the range of $\Delta$ for our results in the over-provisioned model be improved?

\section{Capacitated \texorpdfstring{\boldmath $2$}{2}-Recoloring}

\subsection{Preliminaries}\label{sec: prelim}
Let $[n]$ denote the set of integers $\{1,2,...,n\}$. We denote the two clusters as $C_1$ and $C_2$, each of which has capacity $W=\frac{1+\varepsilon}{2}\sum_{v\in V} w(v)$. A vertex $v$ (resp. set $S\subseteq V$) is said to be \textit{scheduled} on a cluster $C\in \{C_1,C_2\}$ if $v\in C$ (resp. $S\subseteq C$). At any point, let $N(C)$ denote the \textit{residual capacity} of cluster $C\in \{C_1, C_2\}$ where $N(C)=W-\sum\limits_{v\in C}w(v)$. We assume that $w(v)\in [\frac{1}{2}\sum_{v\in V} w(v)]$ for all $v\in V$.

Our algorithms maintain a set of \emph{bipartite connected components} $\mathcal{P}$ induced by the request sequence. A connected component $P$ is a maximal set of vertices such that for any $u_i, v_i\in P$, there is a path between $u_i$ and $v_i$ consisting only of vertices in $P$. Initially, $\mathcal{P}=\{\{v\}|\, v\in V\}$ so that each component consists of a single vertex. Each component $P_i\in \mathcal{P}$ for $i\in [|\mathcal{P}|]$ with bi-partition $(A_i, B_i)$ is expressed as $P_i=(A_i, B_i)$. We refer to the weight (resp. size) of the component as $w(P_i)$ (resp. $|P_i|$), where $w(P_i)=\sum\limits_{v\in A_i\cup B_i} w(v)$ (resp. $|P_i|=|A_i|+|B_i|$). 
\subparagraph{Component Merges.}A request $(u_t, v_t)$ between vertices in distinct components $P_1=(A_1, B_1)$ and $P_2=(A_2, B_2)$ leads to a merge of $P_1$ and $P_2$ into a new component $P_3$, following which $\mathcal{P}$ is updated. More precisely, if $u_t\in A_1$ (resp.\ $B_1$) and $v_t\in A_2$, then $P_3=(A_1\cup B_2, A_2\cup B_1)$ (resp.\ $P_3=(A_1\cup A_2, B_1\cup B_2)$), and analogously for the case when $v_t\in A_1$. For any component $P_i\in \mathcal{P}$, our algorithms schedule $A_i$ and $B_i$ in different clusters. Observe that $|\mathcal{P}|\leq n$ and $\max_{P_i\in \mathcal{P}}(|P_i|)\leq n-1$.

\subparagraph{Recolor Subroutine.} Our algorithms employ a recoloring subroutine $\texttt{Recolor}$. Given a current coloring $c(v)=j$ for any vertex $v\in V$, \texttt{Recolor}($v, i$) recolors $v$ to $i$ if the current coloring $j\neq i$ and sets $N(C_j)$ (resp. $N(C_i)$) to $N(C_j)-w(v)$ (resp. $N(C_i)+w(v)$). 

\subparagraph{Rebalancing Subroutine.}Our algorithms periodically call a re-balancing subroutine\linebreak $\texttt{Rebalance}$ which takes as input a weight parameter $W'$, component set $\mathcal{P}$, and a parameter $\varepsilon$. It computes an \textit{assignment} of bipartite components in $\mathcal{P}$ to clusters such that the total weight of vertices assigned on $C_1$ is at most $W'(1+\varepsilon)$. If this is not possible, it terminates. While an assignment satisfying weight constraint exactly $W'$ can be determined in pseudo-polynomial time $O(|\mathcal{P}|W')=O(nW')$, \texttt{Rebalance} is a FPTAS (fully polynomial-time approximation scheme) and takes $O(\frac{n^2 \ln W'}{\varepsilon})$ time. The full algorithm is deferred to Appendix \ref{sec: Apprebalance}. 

\subsection{An \texorpdfstring{\boldmath $O(n\log n)$}{O(nlog n)}-Competitive Algorithm for Fully Dynamic Recoloring}
\label{sec: fullydynamic}
We present an algorithm \texttt{Greedy-Recoloring} for fully dynamic recoloring in the unweighted setting with $(1+\varepsilon)$-augmentation, where $\frac{8}{n}\leq \varepsilon<1$. We begin by defining a phase.

\begin{definition}
A phase $\mathcal{R}$ of request sequence $\sigma$ is a maximal contiguous sub-sequence of $\sigma$ such that, i) all components induced by requests in the $\mathcal{R}$ are bipartite and, ii) there exists a feasible assignment of components such that the total number of vertices assigned to cluster $C_1$ is at most $(1+\frac{\varepsilon}{2})\frac{n}{2}$.
\end{definition}

The sequence $\sigma$ is partitioned into consecutive phases $\mathcal{R}_1, \mathcal{R}_2,...$ and our algorithm initializes each phase $\mathcal{R}_i$ with a set of singleton components $\mathcal{P}_i=\{\{v\}|\, v\in V\}$ and assigns exactly $\frac{n}{2}$ vertices to each cluster. The set of components $\mathcal{P}_i$ induced by requests throughout any phase $\mathcal{R}_i$ is maintained. A phase terminates when either i) a request $(u_t,v_t)$ is encountered where $u_t, v_t\in A_j$ or $u_t,v_t\in B_j$ for any component $P_j=(A_j, B_j)$ in $\mathcal{P}_i$, i.e. $P_j$ ceases to be bipartite or ii) invoking $\texttt{Rebalance}(\mathcal{P}_i, \frac{n}{2}, \frac{\varepsilon}{2})$ returns infeasible, i.e. an assignment of components is not possible while fulfilling the capacity constraint of $(1+\frac{\varepsilon}{2})\frac{n}{2}$. Thereafter, the next phase $\mathcal{R}_{i+1}$ begins. The following lemma follows from the definition of a phase.

\begin{restatable}{lemma}{optlb}
\label{opt-lb}
For an unweighted instance such that $w(v)=1$ for all $v\in V$, OPT incurs a cost of at least 1 per phase. 
\end{restatable}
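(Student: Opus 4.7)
The plan is to argue by contradiction: assume OPT performs zero recolorings throughout some phase $\mathcal{R}_i$ and derive a contradiction against both of the phase-termination conditions. Let $c^\ast$ denote OPT's coloring at the start of $\mathcal{R}_i$. Since OPT makes no recolorings during $\mathcal{R}_i$, its coloring remains $c^\ast$ throughout, up to and including the triggering request $(u_t,v_t)$. Because OPT must be a correct fully-dynamic algorithm, $c^\ast(u_s)\neq c^\ast(v_s)$ must hold at every time $s\leq t$ when $(u_s,v_s)$ is the current request, so in fact $c^\ast$ is a proper $2$-coloring of the entire graph $G_{\mathcal{R}_i}$ induced by the requests of $\mathcal{R}_i$ up to time $t$. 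Additionally, OPT respects the unaugmented per-cluster capacity $B=n/2$, so $c^\ast$ places at most $n/2$ vertices in each color class.

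Next I would separately dispose of the two phase-termination conditions. If $\mathcal{R}_i$ terminates by condition (i), the triggering request places $u_t,v_t$ on the same side of some bipartite component $P_j=(A_j,B_j)$ of $\mathcal{P}_i$, which forces an odd cycle in $G_{\mathcal{R}_i}$; this directly contradicts the existence of the proper $2$-coloring $c^\ast$. If $\mathcal{R}_i$ terminates by condition (ii), then $\texttt{Rebalance}(\mathcal{P}_i,n/2,\varepsilon/2)$ reports that no assignment of the bipartite components of $\mathcal{P}_i$ places total weight at most $(1+\varepsilon/2)(n/2)$ in each cluster. However, since each connected bipartite component $P_j=(A_j,B_j)$ has its bipartition uniquely determined up to a swap, the restriction of $c^\ast$ to $P_j$ corresponds to an assignment placing one of $A_j,B_j$ in $C_1$ and the other in $C_2$. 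Aggregated over $\mathcal{P}_i$, this produces a single feasible assignment with per-cluster weight at most $n/2\leq (1+\varepsilon/2)(n/2)$, contradicting the reported infeasibility.

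The only subtlety I anticipate is pinning down what the FPTAS $\texttt{Rebalance}$ formally guarantees when it declares infeasibility; specifically, we need that if \emph{some} assignment exists with cluster weights at most $(1+\varepsilon/2)(n/2)$, then $\texttt{Rebalance}$ will not return infeasible. Because $c^\ast$ actually produces an assignment meeting the strictly tighter bound $n/2$, any reasonable specification of the FPTAS suffices; at worst this would be noted as a design requirement of the rebalancing subroutine described in Appendix \ref{sec: Apprebalance}. Combining the two cases, OPT must recolor at least one vertex in each phase $\mathcal{R}_i$. Summing over all phases of $\sigma$ yields the claimed lower bound of one unit of OPT cost per phase.
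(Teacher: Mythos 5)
Your proposal is correct and follows essentially the same two-case argument as the paper's (very terse) proof: termination by non-bipartiteness forces an odd cycle, hence no proper $2$-coloring; termination by \texttt{Rebalance} infeasibility contradicts the existence of an assignment of weight $n/2$ per cluster that OPT's fixed coloring would give. You have simply made explicit the contrapositive structure, the observation that a zero-cost OPT has a constant coloring that properly colors every request in the phase, and the FPTAS guarantee from Appendix~\ref{sec: Apprebalance} (namely that when a weight-$W'$ assignment exists, \texttt{Rebalance} returns one of weight at most $(1+\varepsilon')W'$), all of which the paper leaves implicit.
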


\begin{proof}
If $\texttt{Rebalance}$ does not yield an assignment of components induced by requests in phase $\mathcal{R}_i$, OPT must incur a cost of at least 1 during phase $\mathcal{R}_i$. If $\mathcal{R}_i$ terminates because a component ceases to be bipartite, there exist requests in $\mathcal{R}_i$ on an odd cycle for which a single recoloring is necessary. 
\end{proof}
\subparagraph{Algorithm \texttt{Greedy-Recoloring}.} We give a high level description of our Algorithm \texttt{Greedy-}\linebreak\texttt{Recoloring}. The pseudo code is given in Appendix \ref{app: fullydynamic}.
Consider a request $(u_t,v_t)$ in phase $\mathcal{R}_i$. If $u_t, v_t\in P$ for some component $P\in \mathcal{P}_i$ such that $P$ stays bipartite, nothing is done. Else if $P$ ceases to be bipartite, phase $\mathcal{R}_i$ terminates.

If $u_t, v_t$ are in distinct components $P_1=(A_1, B_1), P_2=(A_2, B_2)$ (w.l.o.g., assume that $|P_1|\geq |P_2|$), vertices in $A_1\cup A_2$ are assigned to $C_1$ and vertices in $B_1\cup B_2$ are assigned to $C_2$, $P_2$ is merged into $P_1$. If $|P_2|\leq \frac{\varepsilon n}{8}$ and recoloring is possible while fulfilling capacity constraints, $P_2$ is re-colored. More concretely, if $u_t\in A_1$ (resp. $B_1$) and $v_t\in A_2$ (resp, $B_2$), vertices in $B_2$ (resp. $A_2$) are recolored and assigned to $C_1$ (resp. $C_2$), and vertices in $A_2$ (resp. $B_1$) are recolored and assigned to $C_2$ (resp. $C_1$) if residual capacities $N(C_1), N(C_2)$ are sufficient. Otherwise, \texttt{Rebalance}($\mathcal{P}_i, \frac{n}{2},\frac{\varepsilon}{2}$) is invoked. If \texttt{Rebalance} yields a feasible component assignment, vertices are appropriately recolored such that $N(C_1)$ and $N(C_2)$ are both at least $\frac{\varepsilon}{2}\cdot\frac{n}{2}=\frac{\varepsilon n}{4}$. 

\begin{proof}[Proof of Theorem~\ref{fdthm}]
By virtue of the recolorings performed, Algorithm\!$\texttt{Greedy-Recoloring}$ always maintains a 2-coloring of the graph induced by the sequence of requests in a phase. Note that \texttt{Rebalance} incurs a total recoloring cost at most $n$. Before the first request arrives, we call $\texttt{Greedy-Recoloring}(\{v|\,v\in V\}, \frac{n}{2})$  For the sake of analysis, we say a component $P_j\in \mathcal{P}_i$ during phase $\mathcal{R}_i$ is \textit{small} if $|P_j|\leq \frac{\varepsilon n}{8}$ and \textit{large} otherwise. Note that the algorithm always recolors the smaller component, $P_2$ when $P_1=(A_1, B_1)$ and $P_2=(A_2, B_2)$ are merged, as long as component $P_2$ is small and capacity constraints are \textit{not violated}. In this case, $|P_1|\geq 2|P_2|$, and each vertex in $P_2$ is charged a recoloring cost of 1. If $P_2$ is large, $\texttt{Rebalance}$ is called. 

The key observation is that whenever a small component $P_1$ is created as a result of two small components $P_1$ and $P_2$ being merged, and $A_2$ and $B_2$ cannot be recolored because either $N(C_1)-|A_2|<|B_2|$ or $N(C_2)-|B_2|<|A_2|$, at least $\frac{\varepsilon n}{8}$ vertices (let $S$ denote the set of these vertices) must have been successfully recolored since the last time \texttt{Rebalance} was called. This holds since $(1+\varepsilon)\frac{n}{2}-(1+\frac{\varepsilon}{2})\frac{n}{2}-|P_1|\geq \frac{\varepsilon n}{8}$. We charge the recoloring cost of $n$ incurred as a result of \texttt{Rebalance} uniformly to all vertices in $S$, i.e. every vertex in $S$ receives a charge of $\frac{n}{\varepsilon n/8}=\frac{8}{\varepsilon}$. Each time a vertex in this manner, the component containing it doubles in size. Hence, any vertex is charged at most $\frac{8}{\varepsilon}\log \frac{\varepsilon n}{8}=O(\log n)$ while it is part of a small component. 

On the other hand, if $P_2$ is large $\texttt{Rebalance}$ is called. The recoloring cost is charged uniformly to all vertices in $P_2$, so that each vertex in $P_2$ receives a charge of at least $\frac{n}{\varepsilon n/8}=\frac{8}{\varepsilon}$. Thus, every time a vertex is part of a large component it can be charged a recoloring cost at most $\log (\frac{8}{\varepsilon})=O(1)$ since its component size doubles and the maximum size of any component is $n$. 

If recoloring does not lead to a call to \texttt{Rebalance}, vertices in $P_2$ are simply charged 1 for the recoloring. Thus, every vertex receives a total charge of $\log n+ \frac{8}{\varepsilon}\log (\frac{8}{\varepsilon})+\frac{8}{\varepsilon}\log \frac{\varepsilon n}{4}=O(\log n)$ throughout the phase. The total recoloring cost incurred by the algorithm is $O(n\log n)$, and by Lemma \ref{opt-lb}, this yields $O(n\log n)$-competitiveness.
\end{proof}

\subsection{An \texorpdfstring{\boldmath $O(\log n)$}{O(log n)}-Competitive Algorithm for Online Recoloring}
\label{sec: logncompetitive}
In this section, we present our main result for 2-recoloring which is an asymptotically optimal $O(\log n)$-competitive algorithm. Recall that in the uniform capacity setting, the graph induced by request sequence $\sigma$ admits a proper coloring such that the weight of vertices assigned any color is exactly $\frac{1}{2}\sum_{v\in V}w(v)$.   
\subparagraph{Notation and Definitions.} Given an initial coloring $c_0$, an arbitrary coloring $c$ and $S\subseteq V$, let $d_S(c, c_0)$ denote the weighted Hamming distance of $c$ from $c_0$ restricted to $S$, i.e. $d_S(c, c_0)=\sum\limits_{v\in S:\, c_0(v)\neq c(v)} w(v)$. Let $d_S(c)\coloneqq d_S(c,c_0)$ for brevity. For a bipartite component $P_i=(A_i, B_i)$, there are exactly two feasible colorings; let $c_1$ denote the coloring for which vertices in $A_i$ (resp. $B_i$) are colored $1$ (resp $2$) and $c_2$ denote the coloring for which vertices in $A_i$ (resp. $B_i$) are colored $2$ (resp. $1$). Let $c_m$ denote the optimal coloring where $m\in \{1,2\}$, such that $d_{P_i}(c_m)=\min\{d_{P_i}(c_1), d_{P_i}(c_2)\}$, i.e. $c_m$ corresponds to a coloring incurring the minimum recoloring cost for $P_i$ w.r.t. the initial coloring $c_0$. Since there are only two possible colorings, it follows that $d_{P_i}(c_m)\leq \frac{w(P_i)}{2}$. Our algorithm assigns vertices in $A_i$ on cluster $C_m$ and vertices in $B_i$ on cluster $C_{3-m}$.
Finally, let $E(P_i)$ denote the \textit{estimated weight} of component $P_i$. Our algorithm maintains the property that $E(P_i)\geq \frac{w(P_i)}{(1+\frac{\varepsilon}{4})}$, and periodically recomputes optimal colorings whenever the total weight $w(P_i)>(1+\frac{\varepsilon}{4})E(P_i)$, and updates $E(P_i)$ to $w(P_i)$. Let $c(S)$ denote the restriction of some coloring $c$ to set $S\subseteq V$.
\subparagraph{Algorithm \texttt{Follow-Greedy}.} The algorithm begins with the set of components $\mathcal{P}=\{\{v\} v\in V\}$ and $E(\{v\})=w(v)$ for all $v$. For any request $(u_t, v_t)$ such that $u_t\in P_1$, $v_t\in P_2$ where w.l.o.g., $w(P_1)\geq w(P_2)$, the algorithm merges $P_2$ into $P_1$. There are two cases to consider: i) if $c(u_t)=c(v_t)$ and $w(P_1)\leq E(P_1)(1+\frac{\varepsilon}{4})$ following the merge and, ii) $w(P_1)>E(P_1)(1+\frac{\varepsilon}{4})$ following the merge. In the former case, only vertices in $P_2$ are recolored as long as capacity constraints are respected. In the latter case, the algorithm computes an optimal coloring $c_m$ for $P_1$. If $c_m(P_1)\neq c(P_1)$, where $c$ is the current coloring and recoloring $P_1$ according to $c_m$ respects capacity constraints, $P_1$ is recolored. If capacity constraints are violated at any point, \texttt{Follow-Greedy} calls (and transitions to) algorithm \texttt{Greedy-Recoloring($\mathcal{P}, W$)}. The complete pseudo-code of the algorithm is deferred to Appendix \ref{app: logncompetitive}.
\subparagraph{Analysis.} Whenever components are \textit{greedily} assigned an optimal coloring such that the total weight of components exceeds $W$, \texttt{Follow-Greedy} transitions to \texttt{Greedy-Recoloring}. At this point, we show that $OPT_{\sigma}=\Omega(W)$, and bound the recoloring cost incurred by \texttt{Greedy-Recoloring} on the request sequence thereafter by $O(W\log n)$. On the other hand, we show that the total recoloring cost incurred for any component $P$ whenever $w(P)$ increases (before \texttt{Greedy-Recoloring} is ever called) can be bounded by $O(\log n)$ times the recoloring cost incurred by OPT on $P$. Combining the two bounds yields $O(\log n)$-competitiveness. 

Lemma \ref{lemma: monotonicity} establishes a monotonicity property-- distance to optimal colorings for any component $P$ is monotonically non-decreasing with its weight $w(P)$.

\begin{restatable}{lemma}{monotonicity}
\label{lemma: monotonicity}
Let $P$ and $P'$ be components at times $t_1$ and $t_2$ respectively, where $1\leq t_1\leq t_2\leq T$ and $P\subseteq P'$. If $c_m$ and $c_m'$ are optimal colorings for $P$ and $P'$ respectively, then $d_{P}(c_m)\leq d_{P}(c_m')\leq d_{P'}(c_m')$.
\end{restatable}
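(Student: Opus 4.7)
\medskip

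\noindent\textbf{Proof plan for Lemma \ref{lemma: monotonicity}.}
The plan is to split the claimed chain of inequalities into its two parts, handle the right-hand one by a trivial monotonicity of the weighted Hamming distance, and reduce the left-hand one to the optimality of $c_m$ on $P$ via a ``restriction is feasible'' argument.

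For the right inequality $d_P(c_m')\leq d_{P'}(c_m')$, I would simply unfold the definition of $d_S(c)=\sum_{v\in S:\, c_0(v)\neq c(v)} w(v)$. Since $P\subseteq P'$ and all weights are non-negative, the sum over $P$ is at most the sum over $P'$. This is a purely set-theoretic monotonicity statement and requires no structural properties of the coloring.

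For the left inequality $d_P(c_m)\leq d_P(c_m')$, the key step is to show that the restriction $c_m'\!\upharpoonright_P$ is one of the (at most) two feasible colorings of the component $P$ at time $t_1$. To see this, recall from the Component Merges paragraph that when two components merge on a requested edge $(u_t,v_t)$, the new bipartition $(A_3,B_3)$ is formed from $(A_1,B_1)$ and $(A_2,B_2)$ by either aligning or swapping sides in accordance with the endpoints $u_t,v_t$. Iterating this rule along the path of merges from time $t_1$ to time $t_2$ shows that if $P=(A,B)$ is the component at time $t_1$ and $P'=(A',B')$ at time $t_2$, then either $A\subseteq A',\, B\subseteq B'$ or $A\subseteq B',\, B\subseteq A'$. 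In either case, $c_m'$ assigns a single color to $A$ and the other color to $B$, so $c_m'\!\upharpoonright_P$ coincides with either $c_1$ or $c_2$, the two feasible colorings of $P$. Since $c_m$ is the one minimizing $d_P(\cdot)$ among $\{c_1,c_2\}$, it follows that $d_P(c_m)\leq d_P(c_m'\!\upharpoonright_P)=d_P(c_m')$.

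Combining the two gives $d_P(c_m)\leq d_P(c_m')\leq d_{P'}(c_m')$, as desired. The only non-trivial step is the claim that the bipartition of $P$ is preserved (up to a swap) as a sub-bipartition of $P'$, so that any feasible 2-coloring of $P'$ restricts to a feasible 2-coloring of $P$. Making this precise via induction on the number of merges that occur between times $t_1$ and $t_2$ is where the bulk of the (still short) argument lies; the rest is a direct appeal to the definitions of $d_S$ and of ``optimal coloring.''
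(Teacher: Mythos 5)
Your proposal is correct and follows essentially the same route as the paper: the right inequality by set monotonicity of $d_S$, and the left inequality by observing that $c_m'$ restricted to $P$ is one of the two feasible colorings of $P$, so optimality of $c_m$ gives $d_P(c_m)\leq d_P(c_m')$. You make explicit the feasibility-of-restriction step (via the merge rule preserving the bipartition up to a swap) that the paper leaves implicit, but this is a matter of detail, not a different argument.
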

\begin{proof}
We note that if $c_m(P)\neq c_m'(P)$, then 
\begin{align*}
d_P(c_m)=\sum\limits_{v\in P:\, c_m(v)\neq c_0(v)} w(v)\leq \sum\limits_{v\in P:\, c_m'(v)\neq c_0(v)} w(v)=d_P(c_m')
\end{align*}
since $c_m$ is an optimal coloring for vertices in $P$. Using this we have,
\begin{align*}
d_{P'}(c_m')&=\sum\limits_{v\in P:\, c_m'(v)\neq c_0(v)} w(v)+\sum\limits_{v\in P'\backslash P:\, c_m'(v)\neq c_0(v)} w(v) \geq \sum\limits_{v\in P:\, c_m'(v)\neq c_0(v)} w(v) \\
&=d_{P}(c_m'). \qedhere
\end{align*}\end{proof}
Our algorithm periodically recomputes an optimal coloring $c_m$ for any component when its weight increases by least a $(1+\frac{\varepsilon}{4})$ factor, and re-colors vertices according to $c_m$ (if needed). Let $T$ denote the time step before \texttt{Greedy-Recoloring} is called; if \texttt{Greedy-Recoloring} is never called, $T=|\sigma|$.
A component $P_i\in \mathcal{P}$ is said to be \textit{alive} during a time interval $T_{P_i}=[1,T']$ where $T'\leq T$, if for all requests $\sigma_t=(u_t,v_t)\in \sigma_{\leq T'}$ such that $u_t\in P_i, v_t\in P_j$, $w(P_i)\geq w(P_j)$; i.e. $P_i$ is not the \textit{smaller} of the two components merged during time $t\in T_{P_i}$.

\begin{restatable}{lemma}{deviationofcoloring}
\label{lemma: deviationofcoloring}
Let $P_i$ be an alive component during interval $T_{P_i}=[1,T']$, let $c_m^t$ denote an optimal coloring for $P_i$ at time $t$. For all $t\in T_{P_i}$ the coloring $c$ maintained by \texttt{Follow-Greedy} satisfies $d_{P_i}(c)\leq d_{P_i}(c_m^t)+\frac{\varepsilon}{4}w(P_i)$, where $w(P_i)$ is the weight of $P_i$ at time $t$.
\end{restatable}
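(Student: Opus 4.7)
The plan is to pivot around the most recent time $t_0 \le t$ at which \texttt{Follow-Greedy} recomputed an optimal coloring for $P_i$; if no such recomputation has occurred I set $t_0 = 1$, in which case $P_i^{(1)}$ is a singleton, $c|_{P_i^{(1)}} = c_0$, and the desired inequality is trivial. Writing $P_i^{(s)}$ for the state of $P_i$ at time $s$, the ``alive'' hypothesis forces $P_i^{(t_0)} \subseteq P_i^{(t)}$, and I would split the total distance as
\[ d_{P_i^{(t)}}(c) \;=\; d_{P_i^{(t_0)}}(c) \;+\; d_{P_i^{(t)} \setminus P_i^{(t_0)}}(c), \]
bounding the two contributions separately.

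For the first contribution, I would show that $c$ restricted to $P_i^{(t_0)}$ is frozen at $c_m^{(t_0)}$ throughout $(t_0, t]$. At time $t_0$ this holds by construction of the latter-case branch (where either $c$ already equals $c_m^{(t_0)}$ on $P_1$, or $P_1$ is recolored to $c_m^{(t_0)}$). Each subsequent merge involving $P_i$ on $(t_0, t]$ must fall in the greedy branch---otherwise a new latter-case trigger would contradict the maximality of $t_0$---and that branch only recolors the smaller side, leaving $P_i^{(t_0)}$ untouched. Hence $d_{P_i^{(t_0)}}(c) = d_{P_i^{(t_0)}}(c_m^{(t_0)})$, which by Lemma~\ref{lemma: monotonicity}, applied to $P_i^{(t_0)} \subseteq P_i^{(t)}$ at times $t_0 \le t$, is at most $d_{P_i^{(t)}}(c_m^{(t)})$.

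For the second contribution, I would use the trivial bound $d_{P_i^{(t)} \setminus P_i^{(t_0)}}(c) \le w(P_i^{(t)}) - w(P_i^{(t_0)})$. Since $E(P_i)$ was set to $w(P_i^{(t_0)})$ at $t_0$ and was not reset on $(t_0, t]$, the invariant $w(P_i^{(t)}) \le (1+\varepsilon/4)\,E(P_i) = (1+\varepsilon/4)\,w(P_i^{(t_0)})$ forces this contribution to be at most $(\varepsilon/4)\,w(P_i^{(t)})$. Adding the two bounds yields the claim.

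The main obstacle is nailing down the frozen claim: two ingredients must mesh cleanly. The alive hypothesis ensures that in every merge on $[t_0,t]$ touching $P_i$, the component $P_i$ is the heavier side, so the greedy branch recolors only the other side; and the $E(P_i)$ bookkeeping ensures the weight threshold $(1+\varepsilon/4)E(P_i)$ is never crossed on $(t_0,t]$, ruling out any re-invocation of the latter-case branch that could recolor vertices of $P_i^{(t_0)}$. Once these two are locked in, the rest of the argument is just monotonicity plus a one-line weight accounting.
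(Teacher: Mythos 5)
Your proof is correct and follows essentially the same approach as the paper's: pivot on the most recent time an optimal coloring was recomputed, observe that the coloring on the old part of $P_i$ is frozen at that optimal, invoke Lemma~\ref{lemma: monotonicity} to relate it to $d_{P_i}(c_m^t)$, and bound the contribution of newly merged vertices by the weight increase, which the $E(P_i)$ bookkeeping caps at $\frac{\varepsilon}{4}w(P_i)$. You spell out the decomposition $d_{P_i^{(t)}}(c) = d_{P_i^{(t_0)}}(c) + d_{P_i^{(t)}\setminus P_i^{(t_0)}}(c)$ and the ``frozen'' claim more explicitly than the paper does, which is a cleaner presentation but not a different argument.
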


\begin{proof}
Let $c(P_i)$ be the coloring maintained by the algorithm. Let $t_a, t_b\in T_{P_i}$ be time-steps such that $t_a<t_b$ at which the algorithm re-computes an optimal coloring for $P_i$ and recolors $P_i$ if necessary. Thus, $c(P_i)=c_m^{t}(P_i)$ at $t=t_a$ and $t=t_b$ respectively. For any $t\in (t_a, t_b)$, if $c=c_m^t$, then $d_{P_i}(c)=d_{P_i}(c_m^t)$. For all other $t$, $d_{P_i}(c_m^t)\geq d_{P_i}(c_m^{t_a})$ by Lemma \ref{lemma: monotonicity}. Let $S$ denote the set of vertices which are merged to $P_1$ in the time interval $(t_a+1, t_b)$. Since $w(S)\leq \frac{\varepsilon}{4}w(P_i)$, $d_{P_i}(c)\leq d_{P_i}(c_m^{t_a})+\frac{\varepsilon}{4}w(P_i)\leq d_{P_i}(c_m^t)+\frac{\varepsilon}{4}w(P_i)$, completing the proof.
\end{proof}

Lemma \ref{lemma: optlbW} gives a lower bound on $OPT_{\sigma}$ if \texttt{Greedy-Recoloring} is called.
\begin{restatable}{lemma}{optlbw}
\label{lemma: optlbW}
If \emph{\texttt{Follow-Greedy}} calls \emph{\texttt{Greedy-Recoloring}} at time $t$, then $OPT_{\sigma_{\leq t}}=\Omega(W)$.
\end{restatable}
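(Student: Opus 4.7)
The plan is to lower-bound $OPT_{\sigma_{\leq t}}$ by contrasting the imbalance of the (hypothetical) coloring $c'$ that \texttt{Follow-Greedy} \emph{would have installed} at the triggering request with the perfectly balanced coloring that OPT is forced to maintain at capacity $B$, and then combining the two via the triangle inequality for weighted Hamming distance. Define the signed imbalance $f(c) = w(\{v : c(v)=1\}) - B$. The trigger condition is precisely that some cluster under $c'$ exceeds $W=(1+\varepsilon)B$, hence $|f(c')| > \varepsilon B$. Any feasible OPT coloring $c^*$ at time $t$ must put weight exactly $B$ on each cluster (since $\sum_v w(v)=2B$), so $f(c^*)=0$. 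Finally, for any two colorings each flipped vertex changes $f$ by at most $w(v)$ in absolute value, so $d_V(c,c') \geq |f(c)-f(c')|$, yielding $d_V(c',c^*) > \varepsilon B$.

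Next I would bound the algorithm's distance from $c_0$. The coloring $c'$ differs from the pre-request coloring on at most the component $P_1^{\mathrm{new}} = P_1 \cup P_2$, and after the intended recoloring $P_1^{\mathrm{new}}$ still inherits its ``alive'' status from $P_1$. Applying Lemma~\ref{lemma: deviationofcoloring} component-by-component and summing gives
\[
d_V(c',c_0) \;=\; \sum_{P_i \in \mathcal{P}} d_{P_i}(c') \;\leq\; \sum_{P_i \in \mathcal{P}} d_{P_i}(c_m^{P_i}) + \tfrac{\varepsilon}{4}\sum_{P_i} w(P_i) \;=\; \sum_{P_i \in \mathcal{P}} d_{P_i}(c_m^{P_i}) + \tfrac{\varepsilon B}{2},
\]
using $\sum_i w(P_i)=2B$. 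The per-component term is a genuine lower bound on $OPT_{\sigma_{\leq t}}$: OPT's coloring at time $t$ must be one of the two proper 2-colorings on each component $P_i$, the cheaper of which costs $d_{P_i}(c_m^{P_i})$, and the global cost is at least $d_V(c^*,c_0) \geq \sum_i d_{P_i}(c_m^{P_i})$. Writing $O := OPT_{\sigma_{\leq t}}$, this gives $d_V(c',c_0) \leq O + \varepsilon B/2$.

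Finally, I would combine both bounds via triangle inequality:
\[
O \;\geq\; d_V(c^*,c_0) \;\geq\; d_V(c',c^*) - d_V(c',c_0) \;>\; \varepsilon B - \Bigl(O + \tfrac{\varepsilon B}{2}\Bigr) \;=\; \tfrac{\varepsilon B}{2} - O,
\]
so $2O > \varepsilon B/2$, i.e.\ $O > \varepsilon B/4 = \Omega(W)$, as required.

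The part I expect to need the most care is the very first step: verifying that Lemma~\ref{lemma: deviationofcoloring} really applies to the intended post-update coloring $c'$ (not merely the algorithm's current coloring before the triggering request) and for the merged component $P_1^{\mathrm{new}}$. The key observations are that components other than $P_1^{\mathrm{new}}$ are unchanged by the current step so the lemma applies to them with the same witnesses, while $P_1^{\mathrm{new}}$ is the continuation of an alive component and, by construction of \texttt{Follow-Greedy}, $c'$ is obtained either by recoloring $P_2$ to fit $P_1$'s coloring (Case~i, in which $w(P_2) \leq \varepsilon w(P_1^{\mathrm{new}})/4$ up to slack in the $E(P_i)$ estimate) or by greedily reinstalling $c_m^{P_1^{\mathrm{new}}}$ (Case~ii, which matches the optimum exactly), so the $\frac{\varepsilon}{4}w(P_i)$ slack in Lemma~\ref{lemma: deviationofcoloring} absorbs the perturbation in both cases.
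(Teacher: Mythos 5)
Your proof is correct and takes essentially the same approach as the paper: it combines the imbalance of the hypothetical coloring $c'$ (which forces an $\Omega(\varepsilon B)$ Hamming distance to a balanced coloring) with the bound $d_V(c',c_0)\leq OPT + \tfrac{\varepsilon B}{2}$ coming from Lemma~\ref{lemma: deviationofcoloring}. The one small inefficiency is the detour through $c^*$ and the triangle inequality; since the initial coloring $c_0$ is itself perfectly balanced (OPT must be feasible at time 0 with capacity $B$), you can directly conclude $d_V(c',c_0)\geq |f(c')-f(c_0)| = |f(c')|>\varepsilon B$ and skip $c^*$ entirely, which recovers the paper's slightly sharper constant $\varepsilon B/2$ instead of your $\varepsilon B/4$.
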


\begin{proof}
Note that the residual capacity $N(C_i)=(1+\varepsilon)W-\sum_{v\in C_i} w(v)$ for any $i\in \{1,2\}$. Algorithm \texttt{Greedy-Recoloring} is called whenever following a merge of $P_2$ into $P_1$, either i) an optimal coloring for $P_1$ violates capacity constraints or ii) recoloring vertices in $P_2$ violates capacity constraints. Let $c$ denote the \textit{infeasible} coloring in any case, s.t. i) $c=c_m$ (the optimal coloring) when $w(P_1)>E(P_1)(1+\frac{\varepsilon}{4})$ or, ii) $c(v)=1$ for all $v\in B_2$, and $c(v)=2$ for all $v\in A_2$ (see lines 13-15, 20 of \texttt{Follow-Greedy}), such that $\sum_{P_i\in \mathcal{P}}\sum_{v\in P_i:\,c(v)=j} w(v)>W(1+\varepsilon)$ for some $j\in \{1,2\}$. Note that such a $j$ must exist since the capacity constraint is violated for one of the two clusters by the coloring $c$. By Lemma \ref{lemma: deviationofcoloring}, it follows that $d_{P_i}(c)\leq d_{P_i}(c_m)+\frac{\varepsilon}{4}w(P_i)$ where $c_m$ denotes an optimal coloring at time $t$ for any component $P_i$. Note that $OPT_{\sigma}\geq \sum_{P_i\in \mathcal{P}} d_{P_i}(c_m)$ since this lower bound disregards capacity constraints. Thus, $\sum_{P_i\in \mathcal{P}} d_{P_i}(c)\leq \sum_{P_i\in \mathcal{P}}[d_{P_i}(c_m)+\frac{\varepsilon}{4}w(P_i)]\leq OPT_{\sigma}+\frac{\varepsilon}{2}W$ since $\sum_{P_i\in \mathcal{P}}w(P_i)=2W$. Thus, $OPT_{\sigma}\geq \sum_{P_i\in \mathcal{P}} d_{P_i}(c)-\frac{\varepsilon}{2}W$. Now, note that $\sum_{P_i\in \mathcal{P}} d_{P_i}(c)\geq \varepsilon W$ since cluster capacities have been exceeded by at least $\varepsilon W$, which is only possible if at least a set $S$ of vertices with weight at least $\varepsilon W$ change their initial coloring. Thus, $OPT_{\sigma}\geq \varepsilon W-\frac{\varepsilon}{2}W=\frac{\varepsilon}{2}W=\Omega(W)$.  
\end{proof}

Lemma \ref{lemma:costofgreedy} bounds the recoloring cost of \texttt{Greedy-Recoloring}. The proof is similar to the proof of Theorem \ref{fdthm} and deferred to Appendix \ref{app: logncompetitive}.
\begin{restatable}{lemma}{costofgreedy}\label{lemma:costofgreedy}
The total recoloring cost incurred by \emph{\texttt{Greedy-Recoloring}} when it is called by \emph{\texttt{Follow-Greedy}} is $O(W\log n)$.
\end{restatable}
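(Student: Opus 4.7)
The plan is to carry over the phase-and-charging argument of Theorem \ref{fdthm} to the weighted setting, replacing vertex counts by vertex weights throughout. First I would re-define the small/large dichotomy by weight: a component $P$ is \emph{small} if $w(P) \leq \varepsilon W / C$ for a suitable constant $C$ (the weighted analogue of $\varepsilon n / 8$), and \emph{large} otherwise. Just as in Theorem \ref{fdthm}, whenever two components $P_1, P_2$ with $w(P_1) \geq w(P_2)$ are merged, \texttt{Greedy-Recoloring} attempts to recolor the lighter component $P_2$; if $P_2$ is small and residual cluster capacities admit the recoloring, I would charge each $v \in P_2$ a cost of $w(v)$ and observe that the component containing $v$ at least doubles in weight. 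Hence this type of charge falls on any fixed $v$ at most $O(\log n)$ times.

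Next I would bound the cost contributed by calls to \texttt{Rebalance}, each of which costs at most $O(W)$ since the total vertex weight is $2W$. When \texttt{Rebalance} is triggered while attempting to merge two small components, the ``slack'' argument of Theorem \ref{fdthm} generalizes: between consecutive \texttt{Rebalance} calls, vertices of total weight $\Omega(\varepsilon W)$ must have been successfully recolored, so distributing the $O(W)$ cost proportionally to $w(v)$ yields a per-unit-weight charge of $O(1/\varepsilon)$ per event, with $O(\log n)$ such events per unit weight by the doubling property. When \texttt{Rebalance} is triggered by merging a large $P_2$, the $O(W)$ cost is charged proportionally to $w(v)$ for $v \in P_2$, giving $O(1/\varepsilon)$ per unit weight; this charge occurs $O(\log(1/\varepsilon)) = O(1)$ times per unit weight since the containing large component at least doubles each time.

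Summing these three sources, each vertex $v$ is charged $O(w(v)\log n)$ in total, so a single phase of \texttt{Greedy-Recoloring} incurs $O(W\log n)$ recoloring cost. The point I anticipate as the main obstacle is bounding the number of phases initiated once \texttt{Follow-Greedy} has transitioned to \texttt{Greedy-Recoloring}, since each phase restart refreshes the ``last \texttt{Rebalance}'' accounting. Because the graph induced by $\sigma$ is bipartite in the online recoloring setting, a phase can terminate only when \texttt{Rebalance} returns infeasible; adapting the capacity-violation argument of Lemma \ref{lemma: optlbW}, each such termination forces OPT to pay an additional $\Omega(W)$ on the remaining suffix, and since $OPT_\sigma$ is trivially at most the total vertex weight $2W$, the number of such phases is $O(1)$. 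Combining this with the per-phase bound yields the claimed $O(W\log n)$ total recoloring cost for \texttt{Greedy-Recoloring}.
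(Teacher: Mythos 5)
Your within-phase charging argument (Type~1 for successful small merges, and the two rebalancing charge types) tracks the paper's proof closely and is the right structure. Two points need attention, one minor and one that's the real gap.

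\textbf{Minor point: the per-vertex $O(\log n)$ claim needs the threshold trick.} You assert that because $v$'s component at least doubles in weight each time $v$ is charged, $v$ receives at most $O(\log n)$ charges. In the weighted setting this does not follow directly: weight doubling from $w(v)$ up to the total weight $2B$ gives $O\bigl(\log\frac{B}{w(v)}\bigr)$ doublings, which is not $O(\log n)$ for vertices of very small weight. The paper handles this by charging only vertices $v$ in $P_2$ with $w(v) > \frac{w(P_2)}{2n}$ (such vertices account for at least half of $w(P_2)$, so the charge still covers the actual cost), so that after $O(\log n)$ weight doublings the threshold condition fails and $v$ is no longer charged. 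Your argument would still yield $O(W\log n)$ in aggregate if summed carefully (by concavity of $x \mapsto x\log(1/x)$), but the per-vertex $O(\log n)$ claim as stated is not justified.

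\textbf{Main gap: there are no phase restarts.} You anticipate, correctly, that the obstacle is bounding the number of phases once \texttt{Follow-Greedy} hands off to \texttt{Greedy-Recoloring}, but your resolution is both unnecessary and unsound. The paper's proof of Lemma~\ref{lemma:costofgreedy} begins by observing that lines 2--6 and 20--22 of \texttt{Greedy-Recoloring} are never executed here: the graph $G_\sigma$ is bipartite (so no odd cycle ever closes a phase), and since the instance admits a proper $2$-coloring with per-cluster weight at most $B$, the FPTAS \texttt{Rebalance} always finds an assignment with per-cluster weight at most $(1+\frac{\varepsilon}{2})B < W$ and therefore never returns infeasible. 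Hence \texttt{Greedy-Recoloring} runs entirely within a \emph{single} phase, and the per-phase bound is the total. Your alternative argument — that each phase termination "forces OPT to pay an additional $\Omega(W)$ on the remaining suffix" — does not hold up: Lemma~\ref{lemma: optlbW} lower-bounds $OPT$ on the prefix $\sigma_{\le t}$ up to the \emph{first} transition, and successive phase terminations (were they to occur) would not yield disjoint $\Omega(W)$ contributions to a single fixed quantity $OPT_\sigma \le \sum_v w(v)$. You would need a disjointness argument you do not have. The simpler observation that only one phase occurs is what makes the lemma go through cleanly.
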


\begin{lemma} \label{main-cost-lemma}
\emph{\texttt{Follow-Greedy}} incurs a total recoloring cost of $O(OPT_{\sigma}\log n)$ before \emph{\texttt{Greedy-}} \emph{\texttt{Recoloring}} is ever called.
\end{lemma}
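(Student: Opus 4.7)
I would decompose the total recoloring cost of \texttt{Follow-Greedy} into two parts: (A) the case (i) \emph{greedy} recolorings, each flipping the smaller merging component $P_2$ at cost $w(P_2)$ to resolve a color conflict, and (B) the case (ii) \emph{lazy} recolorings, each potentially flipping the larger component $P_1$ at cost $0$ or $w(P_1)$ when the $(1+\varepsilon/4)E(P_1)$ threshold is crossed. The target is to bound both by $O(\log n) \cdot \sum_{P^*} d_{P^*}(c_m^*) \leq O(\log n)\cdot OPT_\sigma$, where $P^*$ ranges over the final bipartite components and the last inequality holds because any valid coloring (ignoring capacity) pays at least $\sum_{P^*} d_{P^*}(c_m^*)$.

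For (B), the key observation is that whenever a case (ii) event on $P_1$ actually incurs cost $w(P_1)$ (a full flip), the current coloring $c|_{P_1}$ must be the opposite of $c_m^{t_b}$, so $d_{P_1}(c|_{P_1}) = w(P_1) - d_{P_1}(c_m^{t_b})$. Plugging into Lemma~\ref{lemma: deviationofcoloring} yields $d_{P_1}(c_m^{t_b}) \geq (1 - \varepsilon/4)\,w(P_1)/2$, hence $w(P_1) = O(d_{P_1}(c_m^{t_b})) = O(d_{P^*}(c_m^*))$ via Lemma~\ref{lemma: monotonicity} (for $\varepsilon < 1$ a constant). Because consecutive case (ii) events on a chain multiply its weight by at least $1 + \varepsilon/4$, each chain has $O(\varepsilon^{-1}\log n)$ such events. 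Summing over chains gives (B) $\leq O(\log n) \cdot OPT_\sigma$.

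For (A), I would amortize case (i) cost within inter-case (ii) intervals on each chain. In an interval starting at weight $E_a$, the total weight absorbed via case (i) merges (which only occur while $w(P_1) \leq E_a(1+\varepsilon/4)$) is at most $(\varepsilon/4)E_a$, so the interval's case (i) cost $\sum w(P_2)$ is at most $(\varepsilon/4)E_a$. When the interval culminates in a case (ii) flip of cost $\Theta(E_a)$, the entire case (i) cost in that interval is dominated by the case (ii) cost, which (B) has already charged to $O(d_{P^*}(c_m^*))$. When the interval culminates in a case (ii) no-flip, I would split each case (i) event by whether $d_{P_2}(c|_{P_2}) \geq w(P_2)/4$---in which case Lemma~\ref{lemma: deviationofcoloring} forces $d_{P_2}(c_m^{P_2}) = \Theta(w(P_2))$, giving $w(P_2) = O(d_{P^*}(c_m^*))$ by Lemma~\ref{lemma: monotonicity}---or $d_{P_2}(c|_{P_2}) < w(P_2)/4$, in which case the flip of $P_2$ increases the global potential $\Phi = \sum_i d_{P_i}(c|_{P_i})$ by $\Theta(w(P_2))$, and the aggregate such increase is bounded via $\Phi \leq OPT_\sigma + O(\varepsilon)W$ (Lemma~\ref{lemma: deviationofcoloring}) together with the fact that case (ii) reductions in $\Phi$ total at most the case (ii) cost from (B). Combining these two sub-cases with the heavy-path doubling argument (each vertex is in the smaller component at most $O(\log n)$ times) yields (A) $\le O(\log n)\cdot OPT_\sigma$.

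The main obstacle is the two-pronged amortization for (A): unlike (B), case (i) events do not admit a single clean per-event charge to $d_{P^*}(c_m^*)$, so the argument must carefully separate events whose $P_2$ is ``close to $c_0$'' (charged via the global potential, using the assumption that \texttt{Greedy-Recoloring} has not been triggered to absorb the $O(\varepsilon W)$ slack) from those whose $P_2$ is ``close to the opposite orientation'' (charged directly via the flip-implies-large-distance argument combined with Lemma~\ref{lemma: monotonicity}). Adding the (A) and (B) bounds completes the proof.
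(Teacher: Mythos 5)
Your overall decomposition (greedy vs. lazy recolorings, which correspond to the paper's Type~II vs. Type~I) and your treatment of part~(B) are essentially the paper's proof: for a full lazy flip you derive $d_{P_1}(c_m) = \Omega(w(P_1))$ from Lemma~\ref{lemma: deviationofcoloring}, and the $(1+\varepsilon/4)$-geometric weight growth gives $O(\log n)$ such events per chain, so the charge to $OPT$ is sound. The absorption of case~(i) cost into a subsequent case~(ii) flip is also the same as the paper's.

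The gap is in your handling of case~(i) recolorings inside an interval that ends with a \emph{no-flip}, specifically the sub-case $d_{P_2}(c|_{P_2}) < w(P_2)/4$ where you fall back on the global potential $\Phi = \sum_i d_{P_i}(c)$. Your bound $\Phi \leq OPT_\sigma + O(\varepsilon)W$ is correct, but the $O(\varepsilon W)$ slack is \emph{not} $O(OPT_\sigma)$ in general: there are instances where $OPT_\sigma = o(W)$ and \texttt{Greedy-Recoloring} is never triggered (the capacity constraints can stay satisfied even while case~(i) flips drive $\Phi$ up to $\Theta(\varepsilon W)$, since those flips can be balanced across clusters). So the chain ``aggregate increase $\leq \Phi_{\max} + \text{(total decrease)} \leq OPT_\sigma + O(\varepsilon W) + O(\log n)\,OPT_\sigma$'' leaves an uncontrolled $\Theta(W)$ term; the conclusion (A) $\leq O(\log n)\,OPT_\sigma$ does not follow, and the ``assumption that \texttt{Greedy-Recoloring} has not been triggered'' does not repair it, since not triggering \texttt{Greedy-Recoloring} is a statement about capacity, not about $\Phi$ or $OPT_\sigma$.

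The missing observation, which the paper uses and which eliminates the need for both the potential function and your two-way split, is this: when the interval ends with a \emph{no-flip} at $t_j$, the coloring maintained by the algorithm at $t_j$ equals the optimal coloring $c_m^{t_j}$ on all of $P_j$, hence in particular on each recolored $Q$ we have $c_m^{t_j}(Q) = c_{\text{post-flip}}(Q)$. Since $c_{\text{pre-flip}}(Q)$ is within $\frac{\varepsilon}{4}w(Q)$ of an optimal coloring for $Q$ (and thus within $(\frac{1}{2}+\frac{\varepsilon}{4})w(Q)$ of $c_0$ on $Q$), the post-flip coloring satisfies $d_Q(c_m^{t_j}) \geq (\frac{1}{2}-\frac{\varepsilon}{4})w(Q)$. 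Summing over the vertex-disjoint $Q \in \mathcal{Q}_j'$ gives $OPT(P_j) = d_{P_j}(c_m^{t_j}) \geq \sum_Q \Omega(w(Q))$ \emph{regardless} of whether $d_Q(c)$ was above or below $w(Q)/4$ before the flip. This yields the per-interval charge directly to $OPT$, and the doubling argument across the $O(\log n)$ intervals then finishes (A). Your sub-case split and potential-function detour should be replaced by this single uniform argument.
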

\begin{proof}
Let $T=|\sigma|$ if \texttt{Greedy-Recoloring} is never called; otherwise $T=t-1$ if it is called at time $t$. We analyze the cost of \texttt{Follow-Greedy} on the set of all \textit{alive} components $\mathcal{P}_A$ until time $T$. Any alive component $P'$ at any point satisfies $P'\subseteq P$ for some alive component $P\in \mathcal{P}_A$. Let $OPT(P)$ denote the cost incurred by $OPT$ on $P$ until time $T$. We show that the total cost of $\texttt{Follow-Greedy}$ on $P$ is bounded by $OPT(P)O(\log n)$ to conclude the proof. 

For any component $P\in \mathcal{P}_A$, let $I_{P}=\{[0=t_0, t_1), [t_1, t_2),[t_2, t_3),...,[t_f, t_{f+1}=T]\}$ denote the partition of time interval $[0,T]$ such that, for all $j\in [f]$ i) $P_{j}\subseteq P$ is a sub-component of $P$ for which an optimal recoloring is computed at time $t_j$, ii) $(1+\frac{\varepsilon}{4}) w(P_{j-1})\leq w(P_j)$ and iii) $P_{1}\subseteq P_2 \subseteq ... \subseteq P_{f+1}=P$. Recall that optimal colorings are only computed whenever component weights increase by at least a $(1+\frac{\varepsilon}{4})$ factor. Let $\mathcal{Q}_j$ denote the set of components that are merged to $P_{j-1}$ during time interval $[t_{j-1}, t_j]$ to yield $P_j$. Note that for components $Q\in \mathcal{Q}_j$, $w(Q)\leq w(P_{j-1})\leq (1+\frac{\varepsilon}{4})w(P_{j-1})$ by definition of $P_j$ and $\mathcal{Q}_j$ and the aforementioned property ii). Thus, $w(P_{j})\leq 2(1+\frac{\varepsilon}{4})w(P_{j-1})$. Moreover, by Lemma \ref{lemma: deviationofcoloring} the coloring $c$ maintained by the algorithm satisfies $d_{P_j}(c)\leq d_{P_j}(c_m^t)+\frac{\varepsilon}{4}w(P_j)$ where $c_m^t$ denotes an optimal coloring at any time $t\in [t_{j-1}, t_j)$ for $P_j$. There are two types of recolorings performed during time interval $[t_{j-1}, t_j]$:
\begin{enumerate}[-]
\item (\textit{Type I}) $P_j$ is recolored at time $t_j$.
\item (\textit{Type II}) $Q\in \mathcal{Q}_j$ is recolored at $t_Q\in [t_{j-1}, t_j]$ if $c(u_t)=c(v_t)$ where $u_t\in P_{j-1}$ and $v_t\in Q$. 
\end{enumerate} 
We charge the cost of both Type I and Type II colorings to $OPT$. For Type I colorings, note that $c_m^{t_{j-1}}(P_{j-1})\neq c_m^{t_j}(P_j)$, and hence $OPT(P)\geq OPT(P_j)\geq \frac{1}{2}w(P_{j-1})\geq \frac{1}{2}\frac{w(P_{j-1})}{2(1+\frac{\varepsilon}{4})}=\frac{w(P_j)}{4+\varepsilon}$ by Lemma \ref{lemma: monotonicity}. We charge every $v\in P_j$ s.t. $w(v)>\frac{w(P_j)}{2n}$ an amount $2w(v)$ to reflect this cost. Let $j_{max}\leq f$ denote the largest $j$ for which $P_j$ is recolored in this manner. Since these recolorings may happen every time a component undergoes a $(1+\frac{\varepsilon}{4})$ factor increase in weight, each vertex in $P_{j_{max}}$ is charged at most $O(\log_{(1+\frac{\varepsilon}{4})} n)=O(\log n)$ Type I recoloring costs. Accounting for all $P\in \mathcal{P}_A$, the total Type I recoloring cost is $O(OPT_{\sigma}\log n)$.

For Type II recolorings, consider a component $Q\in \mathcal{Q}_j$ that is merged to $P_{j-1}$, at time $t_Q$ for $j\in [f+1]$ and recolored (note that it is possible $Q$ is not recolored). If $P_{j}$ is recolored at time $t_j$ the Type II recoloring cost for $Q$ can be absorbed in the Type I recoloring cost for $P_j$ by charging each vertex $v\in P_j$ where $w(v)>\frac{w(P_i)}{2n}$ a cost $4w(v)$ instead of $2w(v)$ before. 

On the other hand, if $P_{j}$ is not recolored at time $t_j$, let $\mathcal{Q}_j'\subseteq \mathcal{Q}_j$ denote the set of smaller components that are recolored by \texttt{Follow-Greedy} when merged to $P_{j-1}$. By Lemma \ref{lemma: deviationofcoloring}, it follows that for all $t\in [t_{j-1}, t_Q)$, the coloring $c$ maintained by \texttt{Follow-Greedy} satisfies $d_{Q}(c)\leq d_Q(c_m^{t})+\frac{\varepsilon}{4}w(Q)$, where $c_m^{t}$ corresponds to an optimal coloring for $Q\in \mathcal{Q}_j'$ at time $t$. Moreover, $OPT(P_j)\geq d_{P_{j-1}}(c_m^{t_{j-1}})+\sum_{Q\in \mathcal{Q}_j'} \frac{1}{2}\frac{w(Q)}{(1+\frac{\varepsilon}{4})}=d_{P_{j-1}}(c_m^{t_{j-1}})+\sum_{Q\in \mathcal{Q}_j'} \frac{w(Q)}{(2+\frac{\varepsilon}{2})}$ since $P_j$ is not recolored. In other words, the optimal coloring for $P_j$ is sub-optimal for a set $S\subseteq Q$ of weight at least $\frac{w(Q)}{(1+\frac{\varepsilon}{4})}$ before time $t_Q$, when an optimal coloring is determined for component $Q$--and hence, incurs a cost of at least $\frac{1}{2}w(S)$ for recoloring $Q$. We charge every vertex $v\in Q$ for which $w(v)>\frac{w(Q)}{2n}$ a cost $2w(v)$. For any $j$ and $Q\in \mathcal{Q}_j$ for which Type II recolorings happen, every vertex $v$ in $Q$ can only be charged $O(\log n)$ times since the weight of the component containing $v$ doubles each time $v$ is charged. This yields a total cost of $O(OPT_{\sigma}\log n)$ for Type II recolorings, completing the proof.
\end{proof}

\begin{proof}[Proof of Theorem~\ref{thm: logncompetitive}]
We prove that \texttt{Follow-Greedy} is $O(\log n)$-competitive. If \texttt{Greedy-}\linebreak\texttt{Recoloring} is not invoked, Lemma \ref{main-cost-lemma} yields $O(\log n)$ competitiveness. Else, by Lemmas \ref{main-cost-lemma} and \ref{lemma:costofgreedy} a total cost of $O(OPT_{\sigma}{\log n}+W\log n)=O(W\log n)$ is incurred. Combined with Lemma \ref{lemma: optlbW}, this yields $O(\log n)$-competitiveness. 
\end{proof}

\section{Capacitated Online \texorpdfstring{\boldmath $\Delta$}{Delta}-recoloring}
\label{sec: Delta-coloring}

We give deterministic and randomized algorithms for capacitated $\Delta$-recoloring in an overprovisioned setting. Formally, we have $n$ vertices, $\Delta$ colors, each with capacity $(1+\varepsilon)\frac{n}{\Delta}$, such that the maximum degree of the graph induced by request sequence $\sigma$ is at most $(1-\varepsilon)\Delta$. We analyze competitiveness with respect to an optimal offline algorithm with \textit{no capacity constraints}. Initially all vertices are assigned to $\Delta$ colors such that each color has exactly $\frac{n}{\Delta}$ vertices (for convenience, we assume that $n$ is divisible by $\Delta$). 

Our algorithms maintain a list of feasible colors $L(v)\subseteq [1,\Delta]$ for each vertex $v\in V$. We derive a lower bound for $OPT_{\sigma}$ (similarly to \cite{azar2023competitive} for uncapacitated $(\Delta+1)$-recoloring) by considering the size of the minimum vertex cover of graph $G_M$ induced by the set of monochromatic edges $(u_t,v_t)$ (with respect to the initial coloring $c_0$), i.e. $c_0(u_t)=c_0(v_t)$. The following lemma is immediate since $OPT$ must recolor at least one endpoint of a monochromatic edge in $G_M$.
\begin{restatable}[\cite{azar2023competitive}]{lemma}{vertexCoverLowerBound}
\label{lemma: optdelta-lb}
$OPT_{\sigma}\geq |C^*|$, where $C^*$ denotes a minimum vertex cover of $G_M$.
\end{restatable}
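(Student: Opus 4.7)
The plan is to exhibit any set of vertices that \emph{must} be recolored by OPT and show that this set forms a vertex cover of $G_M$. Since $|C^*|$ is the minimum size of such a cover, this immediately gives the lower bound.

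More concretely, let $R \subseteq V$ denote the set of vertices that OPT recolors at least once over the course of serving $\sigma$. In the overprovisioned $\Delta$-recoloring setting each vertex-recoloring incurs unit cost, so
\[
OPT_{\sigma} \;\geq\; |R|.
\]
I would then prove the key claim that $R$ is a vertex cover of $G_M$. Pick any edge $(u_t, v_t) \in G_M$; by definition of $G_M$ we have $c_0(u_t) = c_0(v_t)$, and this edge was requested at some time $t$. By feasibility of OPT's coloring at time $t$, the colors assigned by OPT to $u_t$ and $v_t$ must differ. Since they agreed initially, at least one of $u_t, v_t$ has been recolored by OPT by time $t$, hence belongs to $R$. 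Therefore every edge of $G_M$ is covered by $R$, so $|R| \geq |C^*|$ and the claim follows.

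I do not foresee any real obstacle here: the lemma is essentially a definitional observation, and the only subtlety is being careful about \emph{when} the recoloring must have occurred (before or at the time the edge is requested) and ensuring that we are counting distinct recolored vertices rather than total recolorings, which only strengthens the inequality. A brief remark can be included to note that the bound trivially extends even if a vertex in $R$ is recolored multiple times, since each such recoloring contributes an additional unit to $OPT_{\sigma}$.
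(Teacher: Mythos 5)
Your proposal is correct and follows the same reasoning as the paper, which states the lemma is immediate because OPT must recolor at least one endpoint of every monochromatic edge in $G_M$, so the set of vertices OPT recolors forms a vertex cover of $G_M$. Your write-up simply spells out this one-line observation in more detail.
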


Lemma~\ref{lemma: optdelta-lb} holds for the uncapacitated case \cite{azar2023competitive}; thus, our algorithms in the overprovisioned setting are competitive with respect to an optimal algorithm which is not capacity constrained.
\subparagraph{Maintaining a 2-approximate vertex cover.} Both of our algorithms maintain a $2$-approxi\-mate vertex cover $C$ of $G_M$ \textit{online}, i.e., $|C|\leq 2|C^*|$ using a simple greedy algorithm: on arrival of an edge $(u_t, v_t)$ such that $c_0(u_t)=c_0(v)$, if neither of $u_t$ or $v_t$ are in $C$, both $u_t$ and $v_t$ are added to $C$. Consider any distinct pairs of vertices $(u_i, v_i), (u_j,v_j)$ added to $C$. By virtue of the algorithm, vertices $u_i$ and $v_i$ can not be adjacent to $u_j$ or $v_j$ in $G_M$. Thus, any vertex cover must include one of $u_t$ or $v_t$ whenever $u_t, v_t$ are added by the greedy algorithm after a request $(u_t,v_t)$ is encountered. It follows that $C\leq 2|C^*|$.

\subparagraph{The Algorithm.} We give a generic algorithm \texttt{$\Delta$-Recoloring} below that invokes a \texttt{Recolor} subroutine. Our deterministic and randomized algorithms differ in their respective implementations of this subroutine. The vertex cover $C$ is initialized to $\emptyset$. On any request $(u_t,v_t)$, if $c(u_t)=c(v_t)$ there are multiple cases to consider: 
\begin{enumerate}[1.]
\item If $u_t, v_t\notin C$: add $u_t, v_t$ to $C$, and call \texttt{Recolor}($u_t$). 
\item Elsif $u_t\in C$, $v_t\notin C$: call \texttt{Recolor}($u_t$).
\item Else: if both $u_t, v_t\in C$, call \texttt{Recolor}($\arg\max_{w\in \{u_t,v_t\}}deg(w)$).
\end{enumerate}

\subsection{A Deterministic \texorpdfstring{\boldmath $O(\Delta)$}{O(Delta)}-Competitive Algorithm}
We give a deterministic algorithm by describing a \textit{deterministic} \texttt{Recolor} subroutine which is used in conjunction with algorithm \texttt{$\Delta$-Recoloring}.  As before, let $N(C_i)=(1+\varepsilon)\frac{n}{\Delta}-\sum_{v\in C_i} w(v)=(1+\varepsilon)\frac{n}{\Delta}-|C_i|$ denote the residual capacity of cluster $C_i$, for any $i$.

\subparagraph{Deterministic \texttt{Recolor}(\texorpdfstring{\boldmath $v$}{v}).}
Suppose $v$ is currently assigned to color $i$. Among the list of feasible colors $L(v)$ for $v$, we recolor $v$ with color $j$ with maximum residual capacity; i.e., $C_j=\arg\max_{\{C_i:\,i\in L(v), N(C_j)>0\}}N(C_i)$.  The variables $N(C_i)$ (resp. $N(C_j)$) are incremented (resp. decremented) by 1, and the list $L(v)$ updated. If $N(C_j)=0$ for all $j\in L(v)$, then the following \texttt{Rebalance} procedure is called and all colors are assigned at most $\lceil \frac{n}{\Delta}\rceil$ vertices.
\subparagraph{Deterministic \texttt{Rebalance}.} For any graph $G$ with maximum degree at most $r$, an equitable coloring is a proper coloring such that for any $i,j\in [r+1]$ the number of vertices assigned color $i$ differs by the number of vertices assigned color $j$ by at most 1. Kierstead et al. \cite{kierstead2010fast} present an $O(rn^2)$-time algorithm to compute a $(r+1)$-equitable coloring. We use their algorithm as the rebalancing procedure to recompute a coloring for the subgraph $G_t$ induced by $\sigma_{\leq t}$. The cost of \texttt{Rebalance} is trivially bounded by $O(n)$.

\begin{proof}[Proof of Theorem~\ref{thm:delta-1}]
Let $|C^*|$ denote the \textit{size} of the optimal vertex cover for the graph $G_M$. Our algorithm only recolors vertices in $C$. Since the maximum degree is $(1-\varepsilon)\Delta$, there are at least $\varepsilon \Delta$ feasible colors for each vertex when recoloring happens. If there exists a feasible color $j\in L(v)$ s.t. $N(C_j)\geq 1$, a cost of 1 is incurred. Thus, the total cost if \texttt{Rebalance} is never called throughout the request sequence is at most $O(2\cdot |C^*|\cdot(1-\varepsilon)\Delta)=O(|C^*|\Delta)$. Combined with Lemma \ref{lemma: optdelta-lb}, this yields $O(\Delta)$-competitiveness.

Now, let $v$ denote the first such vertex for which a rebalancing procedure is called. This implies that for at least $\varepsilon\Delta$ feasible colors $j\in L(v)$, s.t. $N(C_j)=0$. It follows that $OPT\geq |C^*|\geq \varepsilon \Delta(\frac{\varepsilon n}{\Delta}-1)=\varepsilon^2 n-\varepsilon\Delta\geq \frac{\varepsilon^2n}{2}$ in this case, since only vertices in $C$ are recolored. Since the sum of degrees of all nodes in the subgraph $G_T$ induced by $\sigma$ is $n\Delta(1-\varepsilon)$, the re-balancing procedure can be called $O(\frac{n\Delta(1-\varepsilon)}{\varepsilon^2n})=O(\Delta)$ times, incurring a total re-balancing cost $O(n\Delta)$ since a single call to re-balancing incurs $O(n)$ recolorings. The total number of recolorings between re-balancing calls is bounded by $O(n\Delta)$. Since $OPT=\Omega(n)$ in this case, the algorithm is $O(\Delta)$-competitive, completing the proof.
\end{proof}

\subsection{A Randomized \texorpdfstring{\boldmath $O(1)$}{O(1)} Competitive Algorithm}
\label{sec: randalg}
Our randomized algorithm uses \textit{randomized} \texttt{Recolor} and \texttt{Rebalance} subroutines in conjunction with algorithm \texttt{$\Delta$-Recoloring}. 
\subparagraph{Randomized \texttt{Recolor}\texorpdfstring{\boldmath $(v)$}{(v)}.}
Suppose $v$ is currently assigned color $i$.  We pick a feasible color $j$ for $v$ uniformly at random from $L(v)$. If $N(C_j)\geq 1$, recolor $v$ to $j$ and assign to $C_j$, and increment $N_C(i)$ (resp. decrement $N_C(j)$) by 1; otherwise, call randomized \texttt{Rebalance}. 
\subparagraph{Randomized \texttt{Rebalance}.}Initially, for each vertex $v$, the feasible list of colors is $L(v)=[\Delta]$. Thereafter, a coloring for $V$ is sequentially determined as follows: for each $v\in V$, a random color $c$ is picked from the current list of feasible colors $L(v)$, and assigned to $v$. For all neighbors $u$ of $v$, $c$ is removed from $L(u)$ and the process repeats.

The following holds since the maximum degree of any vertex is always bounded by $(1-\varepsilon)\Delta$.
\begin{observation}\label{obs: sizeoffeasiblelist} $|L(v)|\geq \varepsilon \Delta$ for any $v\in V$ and any time $t$.
\end{observation}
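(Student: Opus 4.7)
The claim is essentially a consequence of the overprovisioned setting, which guarantees that the graph $G_t$ induced by $\sigma_{\leq t}$ has maximum degree at most $(1-\varepsilon)\Delta$. The plan is to show that the feasible list $L(v)$ is never depleted beyond this degree budget, regardless of whether we are in the ambient state of the algorithm or mid-way through a call to randomized \texttt{Rebalance}.

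First, I would state precisely what $L(v)$ represents in each context. Outside of a rebalance, $L(v)$ is maintained as the set of colors $c \in [\Delta]$ such that no current neighbor of $v$ in $G_t$ has color $c$. Inside a call to randomized \texttt{Rebalance}, the list is reset to $[\Delta]$ and then, as the procedure iterates through vertices $u$ in some order and assigns each a color $c_u$, the color $c_u$ is removed from $L(v)$ for every neighbor $v$ of $u$. In both cases, the colors removed from $L(v)$ are in bijection with a subset of the currently-colored neighbors of $v$ (each contributing at most one removal).

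Next, I would combine this with the degree bound. In the ambient case, every neighbor of $v$ is colored, and $v$ has at most $(1-\varepsilon)\Delta$ neighbors in $G_t$, so at most $(1-\varepsilon)\Delta$ colors have been removed from $[\Delta]$, giving
\[
|L(v)| \;\geq\; \Delta - (1-\varepsilon)\Delta \;=\; \varepsilon\Delta.
\]
Inside randomized \texttt{Rebalance}, only the subset of neighbors processed so far has contributed removals, and this subset still has size at most the degree of $v$, which is at most $(1-\varepsilon)\Delta$, yielding the same bound.

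I do not anticipate a real obstacle here: the statement is a direct pigeonhole argument against the degree bound, and the only thing to be careful about is that the definition of $L(v)$ used in the two subroutines is consistent with what the algorithm actually maintains (i.e., that removals from $L(v)$ are indeed charged one-per-neighbor and never more). That consistency is immediate from the description of \texttt{Recolor} and \texttt{Rebalance}, so the observation follows.
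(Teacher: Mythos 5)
Your proof is correct and matches the paper's reasoning: the observation is presented there as an immediate consequence of the maximum degree bound $(1-\varepsilon)\Delta$, and your elaboration that each colored neighbor removes at most one color from $L(v) \subseteq [\Delta]$ is exactly the implicit pigeonhole argument.
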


\subparagraph{Analysis of \texttt{Rebalance}.}  A key property of randomized rebalancing is that the distribution of the vertices across the colors is very close to balanced. This is formalized by the following lemma, whose proof utilizes a martingale argument and Azuma's inequality. 
\begin{restatable}{lemma}{rebalancinglemma}\label{lemma: rebalancinglemma}
Randomized \emph{\texttt{Rebalance}} computes a proper coloring $c$ satisfying i) the number of vertices assigned any color $i$ where $i\in [\Delta]$ is at most $(1+\frac{\varepsilon}{2})\frac{n}{\Delta}]$ with probability $1 - \Delta e^{-\frac{\varepsilon^2 n}{2\Delta^2}}$ and ii) the probability that any two vertices $u,v$ have the same color is at most $\frac{1}{\varepsilon\Delta}$. 
\end{restatable}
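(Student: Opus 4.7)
The plan is to handle the two parts separately. Part (ii) admits a short direct argument from Observation~\ref{obs: sizeoffeasiblelist}: for any two vertices $u,v$, processed WLOG in that order, if they are adjacent then $c(u)$ is removed from $L(v)$ before $v$ picks a color, giving $\Pr[c(u)=c(v)]=0$; if they are not adjacent, then $v$ samples uniformly from a list of size at least $\varepsilon\Delta$, so $\Pr[c(v)=c(u)]\leq 1/|L(v)|\leq 1/(\varepsilon\Delta)$ regardless of the realization of $c(u)$. The properness of $c$ itself is immediate, since the procedure explicitly strikes $c(v)$ from $L(u)$ for every uncolored neighbor $u$ of a just-colored $v$.

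For part (i), the plan is to fix a color $i\in[\Delta]$, concentrate $X_i:=|\{v: c(v)=i\}|$ around its mean using a Doob martingale with Azuma's inequality, and then union-bound over the $\Delta$ choices of $i$. The mean $\E[X_i]=n/\Delta$ will follow from a symmetry argument: the algorithm is invariant under relabeling of colors (initial lists are $[\Delta]$ and the update rule is color-symmetric), so $\Pr[c(v_j)=i]=1/\Delta$ for every vertex $v_j$ and hence $\E[X_i]=n/\Delta$. With the filtration generated by the sequence $R_1,\dots,R_n$ of random color picks (one per processed vertex), I would set $Y_j=\E[X_i\mid R_1,\dots,R_j]$ so that $Y_0=n/\Delta$ and $Y_n=X_i$. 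Azuma's inequality then yields $\Pr[X_i-n/\Delta\geq t]\leq \exp(-t^2/(2\sum c_j^2))$ whenever $|Y_j-Y_{j-1}|\leq c_j$; plugging in $t=\varepsilon n/(2\Delta)$ and $\sum c_j^2=O(n)$ would deliver a tail bound of the claimed shape, and union-bounding over the $\Delta$ colors produces the factor of $\Delta$ in the statement.

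The main obstacle is bounding $|Y_j-Y_{j-1}|$ by a small constant. The direct contribution of $R_j$ to $X_i$ is at most $1$, but revealing $R_j$ also perturbs the lists $L(v_k)$ for later-processed neighbors $v_k$ of $v_j$, which can in principle cascade through the graph. I would control this via a coupling argument: construct two runs of \texttt{Rebalance} that share all randomness except $R_j$ and show that the expected Hamming distance between the two resulting colorings is $O(1)$. The intuition is that because $|L(v_k)|\geq \varepsilon\Delta$ throughout, swapping a single element of $L(v_k)$ shifts $v_k$'s color distribution in total variation by only $O(1/(\varepsilon\Delta))$; summing these expected flips over the $\leq \Delta$ late neighbors of $v_j$ gives $O(1/\varepsilon)$ at the first level, and the contraction factor $1/(\varepsilon\Delta)$ at each further level keeps the cascade geometrically bounded. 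Making this coupling rigorous, rather than the accompanying Azuma arithmetic, is where the bulk of the proof effort will lie.
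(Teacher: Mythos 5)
Your proposal follows essentially the same route as the paper. For part~(ii), your argument is the paper's: order $u$ before $v$; if they are adjacent then $c(u)$ is struck from $L(v)$ and $\Pr[c(u)=c(v)]=0$; otherwise $v$ draws uniformly from a list of size at least $\varepsilon\Delta$ by Observation~\ref{obs: sizeoffeasiblelist}, so the probability of a collision is at most $1/(\varepsilon\Delta)$. For part~(i), the paper likewise encodes the per-vertex random choices as i.i.d.\ seeds $X_1,\dots,X_n$, forms the Doob martingale $Z_j = \E[Y_i \mid X_1,\dots,X_j]$, establishes $\E[Y_i]=n/\Delta$ by color symmetry, applies Azuma-Hoeffding with $|Z_j-Z_{j-1}|\le 1$, and union-bounds over the $\Delta$ colors.

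Where you depart is in treating the martingale-differences bound as the crux. You are right that this is the delicate step---the paper simply asserts $|Z_j - Z_{j-1}| \leq 1$ with no supporting argument, while changing $X_j$ perturbs the feasible lists of every later-processed neighbor of $v_j$ and can in principle cascade. However, the coupling you sketch does not close the gap as stated. A vertex whose color flips between the two coupled runs perturbs the lists of up to $(1-\varepsilon)\Delta$ later-processed neighbors, each of which then flips with probability $O(1/(\varepsilon\Delta))$; the branching factor is therefore of order $1/\varepsilon$, not the $1/(\varepsilon\Delta)$ contraction you assert, and for small $\varepsilon$ this is at least $1$, so the cascade bound does not decay geometrically. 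A sharper coupling, or a concentration tool tolerant of ``typical'' rather than worst-case bounded differences, would be needed to make the $O(1)$ difference bound rigorous. Also recall that Azuma on a Doob martingale requires $|Z_j-Z_{j-1}|\le c_j$ pointwise over $X_1,\dots,X_{j-1}$, so whatever coupling estimate you derive must be uniform in the prefix, not merely an average-case bound.
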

\begin{proof}
We first note that rebalancing always computes a proper coloring.  To establish a high probability upper bound on the number of vertices in any color, we use a martingale argument.  Let $X_1, X_2,...,X_n$ denote i.i.d.\ random variables uniformly generated from the interval $[0,1]$ in advance.  In the randomized \texttt{Rebalance} subroutine, we can view the color assignment for the $j$th vertex $v$ as follows: the algorithm selects the $k$th color in $L(v)$ for $k = \lceil X_j\cdot |L(v)| \rceil$. This ensures that any color $c\in L(v)$ is picked with probability $\frac{1}{|L(v)|}$. This view allows us to condition our martingale on independent random variables $X_1,..,X_i$ instead of the actual color choices, which are not independent in general.  

Let $Y_i$ denote the number of vertices assigned color $i$ once rebalancing is completed. Note that $Y_i$ depends on random variables $X_1,...,X_n$ all of which are mutually independent. Let $Z$ denote the Doob martingale where $Z_j=\mathbb{E}[Y|\,X_1,...,X_j]$, i.e. $Z_j$ is the expected value of $E[Y]$ given the random choices of the first $j$ vertices in $v$, and let $Z_0=E[Y]$. 

To compute a high probability bound on the value of $Z_n$, we first compute $\mathbb{E}[Z_0]=\E[Y_i]$. We note that the coloring computed by randomized \texttt{Rebalance} is symmetric with respect to any color. Thus, $\E[Y_i]=\E[Y_j]$ for any $i\neq j$ where $i,j\in [\Delta]$. Since $\E[\sum_{i\in [\Delta]}X_i]=n$, we have that $\E[Y_i]=\frac{n}{\Delta}$ for any $i\in \Delta$. Moreover, note that $Z_j-Z_{j-1}\leq 1$ for any $j\in [n]$.  We apply the Azuma-Hoeffding inequality (e.g., see Theorem 13.6 in \cite{mitzenmacher2017probability}) to derive 
\begin{align*}
\Pr\left[Z_n-Z_0\geq \frac{\varepsilon n}{2\Delta}\right]=\Pr\left[Z_n-\frac{n}{\Delta}\geq \frac{\varepsilon n}{2\Delta}\right]\leq e^{-\frac{2\varepsilon^2n^2}{4\Delta^2n}}=e^{-\frac{\varepsilon^2 n}{2\Delta^2}}
\end{align*}
Taking a union bound for all $\Delta$ colors, we have that the probability that for any color $i$, the number of vertices assigned color $i$ is at most $(1+\frac{\varepsilon}{2})\frac{n}{\Delta}$ is at least $1-\Delta e^{-\frac{\varepsilon^2 n}{2\Delta^2}}$. This completes the proof for property i) of coloring $c$ given by randomized rebalancing.

For property ii), let $v$ be colored after $u$ when randomized rebalancing is called. Note that $v$ has at least $\frac{1}{\varepsilon \Delta}$ feasible colors in $L(v)$ at the time it is colored (by Observation \ref{obs: sizeoffeasiblelist}) and chooses a color randomly. Since $u$ is a non-neighbor, $u$ can be assigned one of the colors in $L(v)$; since $v$ is assigned a random color in $L(v)$, we obtain $\Pr[c(u)=c(v)]\leq \frac{1}{\varepsilon\Delta}$.
\end{proof}

\subparagraph{Analysis of recoloring.} Our algorithm recolors only when $c(u_t)=c(v_t)$ for a request $(u_t,v_t)$ at any time $t$.  We first place an upper bound on the probability of recoloring. The proof of the following lemma is deferred to Appendix \ref{app: Delta-coloring}.
\begin{restatable}{lemma}{recoloringprobability}\label{lemma: recoloringprobability}
For a request $(u_t,v_t)$ in which at least one of $u_t$ or $v_t$ has been recolored before time $t$, $\Pr[c(u_t)=c(v_t)]\leq \frac{1}{\varepsilon \Delta}$.
\end{restatable}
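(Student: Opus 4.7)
The plan is to condition on the most recent event that recolors either $u_t$ or $v_t$ strictly before time $t$, which exists by hypothesis. Let $s$ denote the time of that event. Since $s$ is the latest such event, neither vertex's color changes in the interval $(s,t)$, so any bound on $\Pr[c(u_t) = c(v_t)]$ at time $s$ transfers to time $t$. This reduces the lemma to bounding the collision probability immediately after the event at time $s$, and I will split into two cases according to its type.

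If the event at time $s$ is a call to randomized \texttt{Rebalance}, then both $u_t$ and $v_t$ receive their current colors from that call, because \texttt{Rebalance} assigns a color to every vertex in $V$. Lemma \ref{lemma: rebalancinglemma}(ii) then directly gives $\Pr[c(u_t) = c(v_t)] \leq \frac{1}{\varepsilon \Delta}$. Otherwise the event is a call to \texttt{Recolor}; without loss of generality it is \texttt{Recolor}($u_t$), since swapping the roles of $u_t$ and $v_t$ handles the symmetric case. I will condition on the entire history up to, but excluding, $u_t$'s random color selection at time $s$. Under this conditioning, both $v_t$'s current color and $u_t$'s feasible list $L(u_t)$ at time $s$ are determined. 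Since $u_t$ picks a color uniformly at random from $L(u_t)$ and $|L(u_t)| \geq \varepsilon \Delta$ by Observation \ref{obs: sizeoffeasiblelist}, the probability that the drawn color equals $v_t$'s color is at most $\frac{1}{|L(u_t)|} \leq \frac{1}{\varepsilon \Delta}$. Averaging over the conditioning yields the unconditional bound.

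The main subtlety is handling the conditioning cleanly: the list $L(u_t)$ and $v_t$'s current color at time $s$ are themselves random, as they depend on all prior \texttt{Recolor} and \texttt{Rebalance} calls. Conditioning on the state just before the single random draw at time $s$ is what isolates the fresh randomness of that draw and makes the counting argument go through. Beyond this conditioning step, the proof requires only Observation \ref{obs: sizeoffeasiblelist} in the \texttt{Recolor} case and Lemma \ref{lemma: rebalancinglemma}(ii) in the \texttt{Rebalance} case; no additional concentration or charging is needed.
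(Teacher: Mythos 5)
Your proof is correct and follows essentially the same approach as the paper: case-split on whether the most recent recoloring of $u_t$ or $v_t$ was a \texttt{Rebalance} (apply Lemma~\ref{lemma: rebalancinglemma}(ii)) or a \texttt{Recolor} (condition on the history just before the random draw and use $|L(\cdot)|\geq \varepsilon\Delta$ from Observation~\ref{obs: sizeoffeasiblelist}). If anything, your statement of the conditioning -- on the full history up to but excluding the draw at time $s$ -- is a bit cleaner than the paper's formulation via the sets $A'$ and $S'$.
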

For our analysis, we partition the request sequence into phases where a phase ends whenever a rebalancing occurs.  This yields a (possibly empty) sequence of complete phases followed by an incomplete phase.  Note that the length of any complete phase is a random variable.  We first show that each complete phase has $\Omega(n)$ recolorings with high probability.  
\begin{restatable}{lemma}{loadofcolors}\label{lemma: loadofcolors}
A complete phase has at least $\varepsilon^2 n/4$ recolorings with probability $1 - 2\Delta e^{-\frac{\varepsilon^2 n}{2\Delta^2}}$.
\end{restatable}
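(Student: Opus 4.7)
The plan is to lower-bound the length of a complete phase by arguing that no color can reach capacity too quickly, so the rebalance-triggering event cannot occur within the first $K = \eps^2 n/4$ recolorings with high probability. First I would condition on the high-probability event from Lemma~\ref{lemma: rebalancinglemma}(i) (applied to the rebalance that initiated the phase, or to the initial equal assignment): with probability at least $1 - \Delta e^{-\eps^2 n/(2\Delta^2)}$, every color holds at most $(1+\eps/2)n/\Delta$ vertices at the start of the phase. Under this event, for randomized \texttt{Recolor} to trigger a rebalance during the phase, some color $c$ must reach capacity $(1+\eps)n/\Delta$, which requires at least $\eps n/(2\Delta)$ successful recolorings targeting $c$ since the phase began.

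Next I would fix a color $c$ and bound $Z_c$, the number of times the random choice in the first $K$ \texttt{Recolor} calls lands on $c$ (this upper-bounds the number of successful recolorings into $c$). By Observation~\ref{obs: sizeoffeasiblelist} we have $|L(v)| \geq \eps\Delta$ at every step, so conditional on any history the probability of picking $c$ is at most $1/(\eps\Delta)$. This uniform bound lets me stochastically dominate $Z_c$ by a $\mathrm{Binomial}(K, 1/(\eps\Delta))$ random variable whose mean $K/(\eps\Delta) = \eps n/(4\Delta)$ is exactly half the dangerous threshold $\eps n/(2\Delta)$. A standard Azuma--Hoeffding bound applied to the associated martingale (differences in $[-1,1]$) then gives
\[
\Pr\!\left[Z_c \geq \tfrac{\eps n}{2\Delta}\right] \;\leq\; \exp\!\left(-\tfrac{(\eps n/(4\Delta))^2}{2K}\right) \;=\; e^{-n/(8\Delta^2)} \;\leq\; e^{-\eps^2 n/(2\Delta^2)}.
\]

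A union bound over the $\Delta$ colors, combined with the initial-state event, yields total failure probability at most $2\Delta e^{-\eps^2 n/(2\Delta^2)}$. On the complementary event no color reaches capacity during the first $K$ calls, so no rebalance is triggered, and hence the complete phase contains at least $K = \eps^2 n/4$ recolorings, as claimed.

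The main obstacle is that the sequence of vertices presented for recoloring is adaptive: it depends on the oblivious adversary's requests \emph{and} on the evolving vertex cover, which itself depends on previous random choices of the algorithm. Consequently one cannot directly invoke Chernoff bounds for i.i.d.\ samples. The crucial simplification — and the reason the proof goes through cleanly — is that Observation~\ref{obs: sizeoffeasiblelist} provides a history-independent upper bound of $1/(\eps\Delta)$ on the conditional probability of hitting any fixed color at each step, which is precisely what is needed to dominate $Z_c$ by an i.i.d.\ binomial or, equivalently, to build the $[-1,1]$-bounded supermartingale used above.
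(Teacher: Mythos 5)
Your proposal is correct and mirrors the paper's argument: both condition on the high-probability balanced start from Lemma~\ref{lemma: rebalancinglemma}, stochastically dominate the per-color recoloring count in the first $\varepsilon^2 n/4$ calls by an i.i.d.\ Bernoulli$(1/(\varepsilon\Delta))$ sum using Observation~\ref{obs: sizeoffeasiblelist}, apply a concentration bound, and union-bound over $\Delta$ colors. The only cosmetic difference is that you use Azuma--Hoeffding on the dominating (super)martingale while the paper applies a multiplicative Chernoff bound to the dominating binomial; both yield exponents absorbable into $e^{-\varepsilon^2 n/(2\Delta^2)}$.
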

\begin{proof}
We establish the claim by showing that if at most $\frac{\varepsilon^2 n}{c}$ recolorings take place in phase, then 
the number of vertices recolored to any particular color $i$ is at most $(1+\varepsilon)\frac{n}{\Delta}$ with high probability, indicating that the phase has not completed. 
In the remainder of the proof, we assume that $t = \frac{\varepsilon^2 n}{4}$ recolorings take place.
Let $Y_i$ denote the random variable which is 1 if during the $i^{th}$ recoloring for some vertex $v$, $v$ was assigned a color $j$. Note that $\Pr[Y_i=1|\,Y_{i-1},...,Y_1]\leq \frac{1}{\varepsilon\Delta}$ by Observation \ref{obs: sizeoffeasiblelist}. We are interested in the random variable $Y=\sum_{i=1}^t Y_i$, and note that the total number of vertices assigned to color $i$ is bounded by $(1+\frac{\varepsilon}{2})\frac{n}{\Delta}$ at the beginning of a phase with high probability, by Lemma \ref{lemma: rebalancinglemma}. Moreover, $Y$ is stochastically dominated by $Z=\sum_{i=1}^t Z_i$, where $Z_1, Z_2,...,Z_t$ are i.i.d. Bernoulli random variables such that $Z_i=1$ with probability $\frac{1}{\varepsilon\Delta}$ and 0 otherwise. Note that $\E[Z]=\frac{\varepsilon n}{c\Delta}\leq \frac{\varepsilon n}{4\Delta}$. Thus, to upper bound $Y$, we can use a Chernoff bound (see p. 69 of \cite{mitzenmacher2017probability}) for $Z$:
\[
\Pr[Y\geq \frac{n}{4}\Delta]\leq \Pr[Z\geq \frac{\varepsilon n}{2}\Delta] \leq \Pr[Z\geq 2\E[Z]] \leq e^{-\frac{\varepsilon n}{12\Delta}}
\]
Taking a union bound over all $i$, we have that the probability the maximum number of vertices assigned any color $i\in \Delta$ is at most $(1+\varepsilon)\frac{n}{\Delta}$ with probability at least $1-\Delta e^{-\frac{\varepsilon n}{12\Delta}}$.  Adding the preceding failure probability to the failure probability of Lemma~\ref{lemma: rebalancinglemma} and noting that 
$\Delta e^{-\frac{\varepsilon n}{12\Delta}} \le 
\Delta e^{-\frac{\varepsilon^2 n}{2\Delta^2}}$ yields the desired claim.
\end{proof}

Lemma \ref{lemma: phaselength} bounds the number of new edges requested in any completed phase by $\Omega(n\Delta)$ with high probability. This bounds the number of phases by $O(1)$. 
\begin{restatable}{lemma}{phaselength}\label{lemma: phaselength} The number of new requests $(u_t,v_t)$ in any complete phase such that $(u_t, v_t)$ is requested for the first time at time $t$ is at least $\frac{\varepsilon^3 n\Delta}{4(1+\varepsilon)}$ with probability at least $1 - 3\Delta e^{-\frac{\varepsilon^2 n}{2\Delta^2}}$.
\end{restatable}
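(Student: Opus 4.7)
The plan is to prove the claim by a contradiction argument combined with Lemma~\ref{lemma: loadofcolors}. Let $M$ denote the number of new requests in the given complete phase and $R$ the number of recolorings; I assume toward contradiction that $M<M^*:=\varepsilon^3 n\Delta/(4(1+\varepsilon))$, and will show that then $R<\varepsilon^2 n/4$ with high enough probability to contradict Lemma~\ref{lemma: loadofcolors}.

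The first step is a structural observation: the algorithm maintains a proper coloring of the cumulative graph $G_t$ at every point in time. Randomized \texttt{Rebalance} terminates with a proper coloring of $G_t$ by construction, and each call to \texttt{Recolor}$(v)$ draws a color from $v$'s feasible list $L(v)$, which excludes every color currently used by a neighbor of $v$. Hence once an edge $(u,v)$ enters $G_t$, the property $c(u)\neq c(v)$ is preserved forever, so repeated requests on already-seen edges never cause recolorings. Every recoloring in the phase is therefore triggered by a first-time request $(u_t,v_t)$ that happens to satisfy $c(u_t)=c(v_t)$ at the moment of arrival.

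The second step bounds the probability that any individual new request is monochromatic. For a new request with at least one endpoint recolored since the start of the phase, Lemma~\ref{lemma: recoloringprobability} gives $\Pr[c(u_t)=c(v_t)]\leq 1/(\varepsilon\Delta)$. For a new request whose endpoints have not been touched since the start of the phase, the current colors of $u_t,v_t$ are those produced by the most recent randomized \texttt{Rebalance}, and Lemma~\ref{lemma: rebalancinglemma}(ii) gives the same $1/(\varepsilon\Delta)$ bound. Since every color assignment made by the algorithm is a uniformly random draw from a feasible list of size at least $\varepsilon\Delta$ and the adversary is oblivious, this bound holds even conditionally on the sequence of monochromatic outcomes of all previous new requests in the phase. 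Consequently, $R$ is stochastically dominated by a Binomial random variable with parameters $M$ and $1/(\varepsilon\Delta)$, whose mean is at most $\varepsilon^2 n/(4(1+\varepsilon))$ under the assumption $M<M^*$.

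The third step applies a multiplicative Chernoff bound to the dominating binomial, yielding $\Pr[R\geq \varepsilon^2 n/4]\leq \exp(-\Omega(\varepsilon^4 n))$, which in the regime $\Delta = O(\sqrt{n/\log n})$ of Theorem~\ref{thm:delta-2} is much smaller than $\Delta e^{-\varepsilon^2 n/(2\Delta^2)}$. Union bounding with the failure probability $2\Delta e^{-\varepsilon^2 n/(2\Delta^2)}$ from Lemma~\ref{lemma: loadofcolors} then yields the claimed total failure probability of $3\Delta e^{-\varepsilon^2 n/(2\Delta^2)}$.

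The main obstacle is the careful treatment of stochastic dependencies between the monochromatic indicators across the different new requests: each vertex's current color is the outcome of a possibly long chain of random \texttt{Rebalance} and \texttt{Recolor} choices, so these indicators are in general not mutually independent. The stochastic-domination step rests on the fact that at every time step the color of the most recently (re)colored endpoint of the current request was drawn uniformly from a feasible list of size $\geq \varepsilon\Delta$, so conditional on the full preceding history (including the earlier monochromatic outcomes) the probability that the current new request is monochromatic remains at most $1/(\varepsilon\Delta)$. This conditional version of Lemmas~\ref{lemma: recoloringprobability} and~\ref{lemma: rebalancinglemma}(ii) is what enables dominating $R$ by a genuinely independent Binomial sum and then applying Chernoff concentration.
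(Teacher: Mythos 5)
Your proof is correct and follows essentially the same route as the paper's: both condition the monochromatic indicator of each new request on the full preceding history, bound that conditional probability by $1/(\varepsilon\Delta)$ via Lemmas~\ref{lemma: recoloringprobability} and~\ref{lemma: rebalancinglemma}(ii), stochastically dominate the recoloring count by an i.i.d.\ Bernoulli sum, apply a multiplicative Chernoff bound, and union-bound with the failure event of Lemma~\ref{lemma: loadofcolors}. The contrapositive framing and the explicit remark that repeated edges never trigger recolorings (because the algorithm maintains a proper coloring of $G_t$) are cosmetic additions that do not change the substance of the argument.
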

\begin{proof}
Let $Y_i$ denote the random variable which is $1$ if the $i^{th}$ newly requested edge $(u_i,v_i)$ during any phase $R$ leads to a recoloring by the algorithm, and 0 otherwise. By Lemma \ref{lemma: recoloringprobability}, $\Pr[Y_i=1| Y_{i-1},...,Y_1]\leq \frac{1}{\varepsilon\Delta}$. We use the fact that $Y=\sum_{i=1}^{N} Y_i$ is stochastically dominated by $Z=\sum_{i=1}^N D_i$ where $D_i$ is a Bernoulli random variable that is 1 with probability $\frac{1}{\varepsilon\Delta}$ and $0$ otherwise. Thus, we can apply a Chernoff bound (see p. 69 of \cite{mitzenmacher2017probability}) to $Z$ to get a high probability bound on $Y$. Let $N=\frac{\varepsilon^3n\Delta}{4(1+\varepsilon)}$. We analyze the probability that $Z=\sum_{i=1}^N D_i\geq \frac{\varepsilon^2 n}{4}$, i.e. the probability that the number of recolorings is at least $\frac{\varepsilon^2 n}{4}$ recolorings for $N$ new requests in the phase. Then $\E[Z]=\frac{\varepsilon^3 n\Delta}{4\varepsilon (1+\varepsilon)\Delta}=\frac{\varepsilon^2 n}{4(1+\varepsilon)}$. We derive
\begin{align*}
\Pr\left[Y\geq \frac{\varepsilon^2 n}{4}\right]&\leq \Pr\left[Z\geq \frac{\varepsilon^2 n}{4}\right]=\Pr\left[Z\geq (1+\varepsilon)E[Z]\right] \leq e^{-\frac{\varepsilon^4n}{12(1+\varepsilon)}}.
\end{align*}  
We obtain the desired claim by adding the above failure probability to that of  Lemma~\ref{lemma: loadofcolors}.
\end{proof}

Lemma \ref{lemma: firstphaserecolorings} bounds the expected number of recolorings throughout the algorithm.
\begin{restatable}{lemma}{firstphaserecolorings}\label{lemma: firstphaserecolorings}
The expected number of recolorings performed during the first phase is  $O(|C^*|)$.  The expected number of recolorings of the algorithm is $O(n)$.
\end{restatable}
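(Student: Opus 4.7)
I would split the recolorings into two categories according to whether the recolored vertex has been recolored earlier. Since the algorithm only recolors vertices that belong to the vertex cover $C$, and the greedy cover construction maintains $|C| \le 2|C^*|$, the number of distinct vertices that are ever recolored is at most $2|C^*|$, which bounds the \emph{first-time recolorings}. Every \emph{subsequent} recoloring is triggered by a request $(u_t,v_t)$ in which at least one endpoint has already been recolored, so Lemma \ref{lemma: recoloringprobability} bounds the monochromaticity probability of such a request by $\tfrac{1}{\varepsilon\Delta}$. Because repeated edges incur no cost and every recolored vertex has degree at most $(1-\varepsilon)\Delta$, the number of distinct edges of $G_T$ that can ever trigger a subsequent recoloring is at most $2|C^*|(1-\varepsilon)\Delta$. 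Linearity of expectation then yields $O(|C^*|)$ subsequent recolorings in expectation, proving the first claim.

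For the total bound I would argue phase by phase. After the first phase, every phase starts with a randomized \texttt{Rebalance}, and by Lemma \ref{lemma: rebalancinglemma}(ii) any pair of vertices is monochromatic immediately afterward with probability at most $\tfrac{1}{\varepsilon\Delta}$; once a vertex is recolored within the phase, Lemma \ref{lemma: recoloringprobability} extends the same bound to every edge incident to it. Hence in any phase $i \ge 2$, each newly requested edge triggers a \texttt{Recolor} with probability at most $\tfrac{1}{\varepsilon\Delta}$, and since the request graph has at most $n(1-\varepsilon)\Delta/2 = O(n\Delta)$ distinct edges, the expected number of \texttt{Recolor} calls inside a single phase is $O(n)$. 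Lemma \ref{lemma: phaselength} guarantees that each completed phase consumes $\Omega(n\Delta)$ fresh edges with high probability, so with high probability there are only $O(1)$ completed phases; each phase-ending call to \texttt{Rebalance} contributes an additional $O(n)$ recolorings. Combining the $O(|C^*|) = O(n)$ first-phase contribution with $O(1)$ subsequent phases, each of expected cost $O(n)$, gives $E[R] = O(n)$.

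The main subtlety is converting the ``with high probability'' guarantee of Lemma \ref{lemma: phaselength} into an honest statement about the expected number of phases. For $\Delta = O(\sqrt{n/\log n})$ and constant $\varepsilon$, the per-phase failure probability $3\Delta e^{-\varepsilon^2 n/(2\Delta^2)}$ is polynomially small in $n$, so a union bound over the at most $O(n\Delta)$ possible phase boundaries, combined with the crude deterministic upper bound of $O(n\Delta)$ recolorings per phase, absorbs the bad-event contribution into a lower-order term. This lets the high-probability phase-count bound pass cleanly into the expectation, completing the argument.
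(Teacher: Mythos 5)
Your proof is correct, and for the first claim it is essentially the paper's argument with a welcome extra precision: the paper simply says the ``number of events for which a recoloring may happen is $|\mathcal{C}|\Delta(1-\varepsilon)$, each with probability at most $\frac{1}{\varepsilon\Delta}$,'' which is a bit loose for the first recoloring of each cover vertex (that event has probability $1$, not $\frac{1}{\varepsilon\Delta}$); your explicit split into at most $2|C^*|$ first-time recolorings and at most $2|C^*|(1-\varepsilon)\Delta$ edge-triggered events with probability $\frac{1}{\varepsilon\Delta}$ cleanly fixes this.

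For the second claim, however, you have over-engineered the argument relative to what the lemma actually asserts. The lemma's ``number of recolorings'' refers only to \texttt{Recolor}$(v)$ calls, not to the vertices recolored inside \texttt{Rebalance} (the rebalancing cost is accounted for separately in the proof of Theorem~\ref{thm:delta-2}, using Lemma~\ref{lemma: phaselength}). With that reading, no phase-by-phase bookkeeping, no appeal to Lemma~\ref{lemma: phaselength}, and no conversion of a high-probability phase count into an expectation are needed: the graph $G_T$ has at most $n\Delta(1-\varepsilon)/2$ distinct edges, each distinct edge can trigger at most one \texttt{Recolor} call, the at most $n$ ``first-time'' events (neither endpoint ever recolored by \texttt{Recolor} or \texttt{Rebalance}) contribute $1$ each, and every other event has probability at most $\frac{1}{\varepsilon\Delta}$ by Lemma~\ref{lemma: recoloringprobability}, giving $n + n\Delta\cdot\frac{1}{\varepsilon\Delta} = n(1+1/\varepsilon)$ in expectation directly by linearity. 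Your route --- bounding per-phase recolorings, then multiplying by the whp phase count and absorbing the bad event --- does reach a correct conclusion (and indeed proves a slightly stronger statement that also folds in rebalancing cost), but it imports Lemma~\ref{lemma: phaselength} and the $\Delta = O(\sqrt{n/\log n})$ restriction into a lemma that, in the paper, holds unconditionally and for all $\Delta$; keeping those two bounds decoupled is precisely what makes the case analysis in the proof of Theorem~\ref{thm:delta-2} go through cleanly.
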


\begin{proof}
From Lemma \ref{lemma: recoloringprobability}, it follows that at any time $t$ and any vertex $v$ such that $u$ is not a neighbor of $v$ at time $t$, $\Pr[c(u)=c(v)]\leq \frac{\varepsilon}{\Delta}$. Since only vertices in the vertex cover $\mathcal{C}$ are recolored, the number of events for which a recoloring may happen is $|\mathcal{C}|\Delta(1-\varepsilon)$. Each such event happens with probability at most $\frac{1}{\varepsilon \Delta}$ and since $|\mathcal{C}|\leq 2|C^*|$, it follows that the expected number of recolorings by the algorithm is bounded by $2|C^*|\frac{(1-\varepsilon)}{\varepsilon}=O(|C^*|)$.

To establish the second statement of the lemma, 
we use Lemma~\ref{lemma: recoloringprobability} and the fact that the total number of distinct edges is bounded by $n\Delta$ to derive that   
the expected total number of recolorings of the algorithm is at most $n + n\Delta/(\varepsilon \Delta) = n(1 + 1/\varepsilon)$.
\end{proof}
Finally, combining the lemmas above yields $O(1)$-competitiveness for our randomized algorithm, thus proving Theorem~\ref{thm:delta-2}.

\newcommand{\poly}{\mbox{poly}}

\begin{proof}[Proof of Theorem~\ref{thm:delta-2}]
By Lemma~\ref{lemma: optdelta-lb}, $OPT_{\sigma} \ge |C^*|$.  We consider two cases.  First, if $|C^*| \le \frac{\varepsilon^2 n}{4(1+\varepsilon)}$, then the number of edges introduced is at most $\frac{\varepsilon^3 n\Delta}{4(1+\varepsilon)}$.  By Lemma~\ref{lemma: phaselength}, with probability at least $1 - 3\Delta e^{-\frac{\varepsilon^2 n}{2\Delta^2}}$, the first phase does not complete.  The failure probability is $1/\poly(n)$ for $\Delta = O(\sqrt{n/\log n})$. Therefore, by Lemma~\ref{lemma: firstphaserecolorings}, the expected cost of the algorithm is $O(|C^*|) + n\Delta/\poly(n)  = O(|C^*|)$.

Second, if $|C^*| > \frac{\varepsilon^2 n}{4(1+\varepsilon)}$, then by Lemma~\ref{lemma: firstphaserecolorings}, the expected cost of all the recolorings is $O(n)$.  By Lemma~\ref{lemma: phaselength}, the number of phases is $O(1)$ with probability at least $1 - 3\Delta e^{-\frac{\varepsilon^2 n}{2\Delta^2}}$.  Therefore, the expected rebalancing cost is at most $(1 - 1/\poly(n))O(n) + n\Delta/\poly(n) = O(n)$.  Therefore, the expected total cost is $O(|C^*|)$, thus completing the proof.
\end{proof}

\bibliography{p096-Rajaraman}
\appendix
\section{Lower bounds} 
\label{app: lowerbounds}
In the following sections, we present lower bounds for capacitated vertex recoloring in the standard and fully dynamic variants of the problem for $k=2$. For fully dynamic recoloring, we give an $\Omega(n)$ deterministic lower bound for an arbitrary amount of resource augmentation $\varepsilon>0$. For the standard variant, we give an $\Omega(\log n)$ lower bound that holds for both deterministic and randomized algorithms. 

\subsection{Proof of Theorem \ref{lowerbdet}}
\label{app: lbdet}
\lowerbdet*
\noindent \begin{proof}
    We construct an appropriate request sequence $\sigma$ and argue that there exists an offline algorithm for which the competitive ratio achieved by any deterministic algorithm on that sequence is $\Omega (n)$. Let $w(v)=1$ for all $v\in V$. The request sequence consists of edges on an odd cycle of length $\ell=\Omega(n)$ where $\ell=\lfloor \frac{n}{2}\rfloor$ if $\lfloor \frac{n}{2}\rfloor$ is odd, else $\ell=\lfloor \frac{n}{2}\rfloor-1$. Fix an arbitrary set $S$ of $\ell$ vertices $v_1, v_2,...,v_{\ell}$. The request sequence will consist of edges from the cycle $C_S=((v_1, v_2), (v_2, v_3),...,(v_{\ell-1}, v_{\ell}), (v_{\ell}, v_1))$. Now consider $\ell$ offline algorithms $OFF_1, OFF_2,..,OFF_{\ell}$ where for all $i\in [\ell]$, $OFF_i$ maintains a coloring of vertices such that edge $(v_i, v_{(i+1)\text{mod } \ell})$ is the only monochromatic edge, i.e. $c(v_i)= c(v_{(i+1)\text{mod } \ell})$. 
    
    It is easy to observe that for any request $(v_i, v_{(i+1)\text{mod } \ell})$ on the cycle, exactly one offline algorithm $OFF_i$ incurs a cost. In that case, $OFF_i$ recolors $v_i$, and immediately recolors $v_i$ after the request incurring a total cost of $2$, ensuring that the monochromatic edge $(v_i, v_{(i+1)\text{mod } \ell})$ on $C_S$ prior to the request is unchanged. The (total) initial cost incurred by the $\ell$ offline algorithms to color nodes on the cycle such that exactly one edge in $C_S$ is monochromatic is bounded above by $n^2$. From the above, it follows the total cost incurred by the $\ell$ offline algorithms to serve the sequence is at most $\ell^2+2|\sigma|$. By an averaging argument it follows that there exists $j\in [\ell]$ such that $OFF_j$ incurs cost at most $\ell+\frac{2|\sigma|}{\ell}$.

    We now describe how to adaptively construct $\sigma$ such that the cost incurred by any online algorithm $\mathcal{A}$ is at least $|\sigma|$; precisely, all requests correspond to a monochromatic edge that $\mathcal{A}$ has on the cycle $C_S$. Since $C_S$ is an odd cycle, $\mathcal{A}$ must have at least one monochromatic edge on $C_S$. Hence, the total cost incurred by $\mathcal{A}$ is lower bounded by $|\sigma|$. Thus, for $|\sigma|\geq \ell^2$, the cost incurred by an offline algorithm $OFF_j$ is $O(\frac{A_{\sigma}}{\ell})$, implying that the competitive ratio of $\mathcal{A}$ is $\Omega(\ell)=\Omega(n)$, completing the proof.
\end{proof}

\subsection{Proof of Theorem \ref{lowerbrand}}
\label{app: randlognlb}
\lowerbrand*
The proof is similar to Theorem 4 in \cite{rajaraman2022improved}, and Theorem 5.6 in \cite{azar2023competitive}.

\noindent\begin{proof} 
We assume $n=2k$ is a power of 2 (i.e. for the sake of the proof (otherwise, set $n=2^{\lfloor \log (2k) \rfloor}$). We first consider the deterministic lower bound. The request sequence $\sigma$ is presented in successive batches $B_1,...,B_{\log n}$, where in each batch $B_i$, there are exactly $2^{n-i}$ requests such that each request is between endpoints of two distinct components of size exactly $2^{i-1}$ with the same color. More precisely, let $C_{i-1}=\{P_1,P_2,...,P_{\frac{n}{2^{i-1}}}\}$ be the bipartite components (each of which is a simple path of even length) before batch $B_i$ is presented. Every component $P_a$ is uniquely paired with another component $P_b$, for a total of $\frac{n}{2^{i-2}}$ pairs of components. Each request in batch $B_i$ is a request between two endpoints of a pair $(P_a, P_b)$ that have the same color in a coloring maintained by an online algorithm $\mathcal{A}$. As a result, for each pair $(P_a, P_b)$, $\mathcal{A}$ incurs cost at least $2^{i-1}$, for a total cost of $\frac{n}{2^{i-2}}\cdot 2^{i-1}=\frac{n}{2}$. This amounts to a total cost of $\Omega(n\log n)$. The optimal offline algorithm pays cost at most $n$ to compute an optimal coloring. Thus, the competitive ratio of any deterministic online algorithm is $\Omega (\log n)$. 

For the randomized lower bound, we invoke Yao's minimax Lemma and construct a distribution of requests such that any deterministic algorithm incurs expected cost $\Omega(n\log n)$ on the entire sequence. The distribution is fairly simple: for batch $B_i$, a request $(u,v)$ is presented where $u$ and $v$ are randomly chosen endpoints of components $P_a$ and $P_b$ respectively. In this case, any deterministic algorithm must incur an expected cost of at least $\frac{1}{2^{i-1}}$. Thus, the preceding argument yields a lower bound of $\Omega(\log n)$ for randomized algorithms. 

Finally, note that by virtue of the request sequence, any algorithm is forced to keep exactly $\frac{n}{2}$ nodes of each color at any time. Hence, any amount of augmentation is futile.
\end{proof}

\subsection{Proof of Theorem \ref{lowerbdelta}}
\label{app: lbdelta}
\lowerbdelta*
\begin{proof}
The idea of the proof is similar to Theorem 4.8 of \cite{azar2023competitive}, The adversary maintains a set $S$ of $\Delta+1$ vertices at any point. Since the number of available colors are $\Delta$, there always exist two vertices $u,v\in S$, such that $c(u)= c(v)$ where $c$ is the coloring maintained by a deterministic online algorithm. Without loss of generality, we assume that $n$ is divisible by $\Delta$.

Starting with an arbitrary set $S$ of $\Delta+1$ vertices, the adversary does the following. It repeatedly adds an edge $(u,v)$ between vertices $u,v\in S$, s.t. $c(u)=c(v)$, until the degree of at least one vertex $w$ in $S$ becomes $\Delta(1-\varepsilon)$. Then it removes all vertices from $S$ with degree $\Delta(1-\varepsilon)$ (note that there are at most 2 such vertices). Let $r$ denote the number of removed vertices. It then adds $r$ isolated vertices to $S$ and repeats the process. The number of times such a process can be repeated is $\Theta(n)$ since at most 2 vertices leave $S$. 

Note that after $i$ such processes, the number of edges added are at least $\frac{i\Delta}{2}$, following from a lower bound on the sum of degrees of vertices removed from $S$ after $i$ iterations. For each edge in the request sequence, the deterministic algorithm incurs a recoloring cost of 1. Consider the case when the number of processes is $\Omega(n)$: the total cost paid by the online algorithm is $\Omega(n\Delta)$. On the other hand, for any graph with maximum degree upper bounded by $\Delta$, there exists an equitable coloring such that the number of vertices assigned any color is $\frac{n}{\Delta}\pm 1$ \cite{kierstead2010fast}. The optimal offline algorithm simply incurs an initial cost of $O(n)$ to compute an equitable coloring for the graph induced by the request sequence. Thus, any deterministic algorithm for $(1+\varepsilon)$-overprovisioned $\Delta$-recoloring problem has a competitive ratio of $\Omega(\Delta)$. 
\end{proof}
\newpage

\section{A Rebalancing Procedure}\label{sec: Apprebalance}

We give details of the re-balancing subroutine $\texttt{Rebalance}$ which takes as input a weight parameter $W'$, component set $\mathcal{P}$, and $\varepsilon$. It computes a an assignment of components, denoted as tuples, where for each tuple $(i,j)$, $i$ denotes the index of a component $P_i\in \mathcal{P}$ and $j\in \{1,2\}$ denotes $A_i$ is assigned on $C_j$, and $B_i$ on $C_{3-j}$, such that the total weight of components assigned on $C_1$ is at most $W'(1+\varepsilon')$. If such an assignment is not possible, it returns infeasible. We give a FPTAS (fully polynomial-time approximation scheme) which takes $O(\frac{n^2 \ln W'}{\varepsilon})$ time. 

\begin{algorithm}[ht]
\caption{\texttt{Rebalance} ($\mathcal{P}, W', \varepsilon$)}
    \begin{algorithmic}[1]
        \State $T \leftarrow \emptyset$ \Comment{$T$ is a 2D-list where $T[i]$ is a list of feasible assignments of components $P_1,P_2,..,P_i$}.
        \State $T[1]=\{(w(A_1), \{(1,1)\}), (w(B_1), \{(1,2)\})\}$ \Comment{An element of $T[i])$ is a tuple $(x,y)$ s.t. $x$ is the total weight of the assignment specified by $y$.}
        \For{$i=2$ to $n$}
            \State $X\leftarrow T[i-1]\cup \{(x+|A_i|, y\cup\{(i,1)\})|\, (x,y)\in T[i-1]\}$.
            \State $Y\leftarrow T[i-1]\cup \{(x+|B_i|, y\cup\{(i,2)\})|\, (x,y)\in T[i-1]\}$.
            \State $T[i]\leftarrow \texttt{Merge}(T[i],X, Y)$. \Comment{\texttt{Merge} merges lists in sorted order w.r.t. total assignment weights.}
            \State $(a,b)\leftarrow T[i][-1]$. \Comment{$(a,b)$ is the last tuple in $T[i]$}
            \For {$j=(|T[i]|-1)$ to $1$}
                \State $(x,y)\leftarrow T[i][j]$.
                \If{$a\geq \frac{x}{1+\frac{\varepsilon}{2n}}$}
                    \State Mark $(x,y)$. 
                \Else 
                    \State $(a,b)\leftarrow (x,y)$.
                \EndIf 
            \EndFor 
            \State Remove all marked tuples from $T[i]$.
            \For {$(x,y)\in T[i]$}
                \If {$x> (1+\varepsilon)W'$}
                    \State $T[i]\leftarrow T[i]\backslash \{(x,y)\}$.
                \EndIf 
            \EndFor 
    \EndFor
    \If {$T[n]\neq \emptyset$} 
        \State Find a tuple $(x,y)\in T[n]$ such that $x\geq W'$. 
        \For{all $(i,j)\in y$}
            \State For all $v\in A_i$, call \texttt{Recolor$(v, j)$}.
            \State For all $v\in B_i$, call \texttt{Recolor$(v,3-j)$}.
        \EndFor 
   \Else 
   \State Return \textbf{infeasible}.
    \EndIf
    \end{algorithmic}
\end{algorithm}

We describe the \texttt{Rebalance} procedure. It maintains an array $T$ such that $T[i]$ corresponds to a list of assignments for $P_1,P_2,...,P_i\in \mathcal{P}$. Each assignment is a tuple $t=(x,y)$ for which $x$ denotes the total weight of components assigned to $C_1$, and $y$. Each entry $(i,j)\in y$ corresponds to an assignment of component $P_i=(A_i, B_i)$ where $A_i$ is assigned on $C_j$ (and $B_i$ on $C_{3-j}$). For the base case, w.l.o.g. assume $w(A_1)\leq w(B_1)$. Then, $T[1]$ consists of two possible assignments corresponding to $A_1$ (resp. $B_1$) being assigned to $C_1$ (resp. $C_2$) and  $B_1$ (resp. $A_1$) being assigned to $C_1$ (resp. $C_2$). $T[i]$ is obtained by considering various possibilities that $A_i$ and $B_i$ can be assigned given assignments in $T[i-1]$. The \texttt{Merge} subroutine takes 3 lists as arguments, and merges them in a way such that the resulting list, $T[i]$ is sorted w.r.t non-decreasing order of the first entry of the tuple (corresponding to the weight of an assignment).

In lines 9-14, we prune $T[i]$ by removing adjacent tuples (assignments) for which there is another assignment within factor $(1+\frac{\varepsilon}{2n})$ of its weight. This step effectively ensures that the size of the list $T[i]$ is bounded by $O(n\log W')$ for all $i$ (polynomial in the input representation of $W'$). In line 17, we remove all assignments from $T[i]$ which have weight more than $W'(1+\varepsilon)$. Finally, components are recolored if a feasible assignment exists.

\begin{lemma}
    \texttt{Rebalance} runs in time $O(\frac{n^2 \ln W'}{\varepsilon})$. Moreover, if there exists an assignment of components with total weight $W'$ on $C_1$, then \texttt{Rebalance} recolors vertices such that the total weight of vertices assigned on $C_1$ is at most $(1+\varepsilon) W'$, 
\end{lemma}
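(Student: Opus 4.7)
The plan is to decompose the proof into a running-time bound and an approximation-quality bound via a standard Ibarra--Kim-style FPTAS analysis. For the running time, the key invariant to establish is that after the pruning step (lines 9--15) consecutive surviving weights in $T[i]$ differ by a factor of at least $1+\varepsilon/(2n)$; combined with the capacity filter (lines 17--20), all surviving weights lie in $[0,(1+\varepsilon)W']$. This yields $|T[i]|=O(\log_{1+\varepsilon/(2n)}((1+\varepsilon)W'))=O(n\ln W'/\varepsilon)$. Each outer iteration then costs $O(|T[i-1]|)$: two additive passes to build $X$ and $Y$, a linear merge of two sorted lists, a single backward pruning scan, and the capacity filter. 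Summing over $i\le n$ gives the claimed $O(n^2\ln W'/\varepsilon)$ total time.

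For the approximation guarantee, I would prove by induction on $i$ the following: for every \emph{raw} (unpruned) dynamic-program tuple $(x^{*},y^{*})$ realizable by some assignment of $P_1,\dots,P_i$, there exists a surviving $(x,y)\in T[i]$ with $x\le x^{*}\le x(1+\varepsilon/(2n))^{i}$. The base case $i=1$ is immediate since $T[1]$ stores both choices exactly. For the inductive step, write $x^{*}=z^{*}+\delta$ with $\delta\in\{w(A_i),w(B_i)\}$ and $z^{*}$ the weight of a raw prefix assignment for $P_1,\dots,P_{i-1}$. The hypothesis provides $(z,\cdot)\in T[i-1]$ with $z\le z^{*}\le z(1+\varepsilon/(2n))^{i-1}$. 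Extending $(z,\cdot)$ by the same choice of $P_i$ creates a pre-prune entry of weight $z+\delta\le x^{*}$ in either $X$ or $Y$; the pruning invariant then yields a surviving $(x,y)\in T[i]$ with $x\le z+\delta$ and $z+\delta\le x(1+\varepsilon/(2n))$. Combining with $z^{*}\le z(1+\varepsilon/(2n))^{i-1}$ and the trivial inequality $(z+\delta)(1+\varepsilon/(2n))^{i-1}\ge z(1+\varepsilon/(2n))^{i-1}+\delta$ yields $x^{*}\le x(1+\varepsilon/(2n))^{i}$, closing the induction.

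Specializing to $i=n$ and using $(1+\varepsilon/(2n))^{n}\le e^{\varepsilon/2}\le 1+\varepsilon$ for $\varepsilon\in(0,1)$, any raw assignment with weight exactly $W'$ on $C_1$ induces a surviving tuple $(x,y)\in T[n]$ with $W'/(1+\varepsilon)\le x\le W'$, which survives the capacity filter since $x\le(1+\varepsilon)W'$. The subsequent \texttt{Recolor} calls in lines 24--26 realize the chosen assignment, placing weight at most $(1+\varepsilon)W'$ on $C_1$. The most delicate part of the argument is the inductive step: one must verify that additively extending an approximate predecessor by a fresh $P_i$ loses at most one additional factor of $(1+\varepsilon/(2n))$ after pruning rather than compounding multiplicatively with the predecessor's slack, which is exactly what the auxiliary inequality above affords. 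A secondary cosmetic subtlety is reconciling the round-down nature of the approximation with the requirement $x\ge W'$ on line 22; this is handled either by loosening that threshold by the $(1+\varepsilon/(2n))^{n}$ slack or by applying the induction to a raw target slightly above $W'$ so that the surviving tuple lands in $[W',(1+\varepsilon)W']$.
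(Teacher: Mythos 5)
Your proof takes the same route as the paper: a standard Ibarra--Kim style FPTAS analysis bounding $|T[i]|$ by the pruning invariant, and an approximation guarantee via an induction showing that some surviving tuple stays within a $(1+\varepsilon/(2n))^{i}$ factor of any realizable raw weight. The running-time argument is identical. For the approximation, the paper is looser: it states only the one-sided inequality $x \le (1+\varepsilon/(2n))^{i}B$ for the surviving tuple $x$ versus a raw achievable weight $B$, whereas your induction establishes the two-sided sandwich $x \le x^* \le x(1+\varepsilon/(2n))^{i}$, which is the form actually needed to conclude that the surviving weight lands in a window around $W'$. Your explicit verification that extending an approximate predecessor by $P_i$ loses only one further factor of $(1+\varepsilon/(2n))$ (rather than compounding the predecessor's slack multiplicatively) is precisely the step the paper elides with the phrase ``since the trimming procedure only marks tuples for which there exists another tuple [within factor] $(1+\varepsilon/(2n))$.'' You also correctly flag a genuine loose end that the paper does not address: line 22 of the pseudocode searches for a surviving tuple with $x \ge W'$, but a round-down style pruning only guarantees a surviving $x$ in roughly $[W'/(1+\varepsilon), W']$; as you note, either the threshold in line 22 must be relaxed by the accumulated slack, or the induction must be anchored to a raw target slightly above $W'$. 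So your argument is essentially the paper's proof with the inequality directions made consistent and the reconciliation with the feasibility test made explicit.
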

\begin{proof}
    We prove that the length of $T[i]$ is bounded by $O(\frac{n \ln W'}{\varepsilon})$ for all $i$. The claim clearly holds for $i=1$. Note that after the pruning step in 9-14, successive tuples $t_1, t_2$ in $T[i]$ differ in their first coordinate by at least $(1+\frac{\varepsilon}{2n})$. As a result, the length of $T[i]$ is bounded by $O(\log_{(1+\frac{\varepsilon}{2n})}W')=O(\frac{\ln W'}{\ln (1+\frac{\varepsilon}{2n})})=O(\frac{4n\ln W'}{\varepsilon})$ where the last inequality follows from the fact that for $z\in (0,1)$, $\ln(1+z)\geq \frac{z}{2}$. Moreover, the procedure \texttt{Merge} runs in time $|T[i]|+|X|+|Y|=O(|T[i]|)$. Thus, the total running time taken by $\texttt{Rebalance}$ is $O(\frac{n^2\ln W}{\varepsilon})$. 

Note that since all tuples with weight at least $(1+\varepsilon)W'$ are removed from $T[i]$, for all $i$, $(x,y)$ satisfies that $x<(1+\varepsilon)W'$. For any feasible assignment of subset of components in $\{P_1, P_2,...,P_i\}$, such that the total weight of vertices assigned to $C_1$ is $B$, there exists a tuple $(x,y)\in T[i]$ such that $x\leq (1+\frac{\varepsilon}{2n})^iB \leq e^{\frac{i\varepsilon}{2n}}B$ since the trimming procedure only marks tuples for which there exists a another tuple with weight at most $(1+\frac{\varepsilon}{2n})$. This implies that if there exists an assignment of weight $W'$, there exists $(x,y)\in T[n]$ such that $x\leq (1+\frac{\varepsilon}{2n})^n W'\leq e^{\frac{\varepsilon}{2}}W'\leq (1+\varepsilon)W'$ where the last inequality follows from the fact that for any $z\in (0,1)$, $e^{\frac{z}{2}}\leq 1+z$.
\end{proof}
\newpage
\section{Pseudo code of \texttt{Greedy-Recoloring}}
\label{app: fullydynamic} 
We give the pseudo code of our algorithm \texttt{Greedy-Recoloring} as follows. 
\begin{algorithm}[ht]
    \caption{\texttt{Greedy-Recoloring}($\mathcal{P}, W$)}
    \begin{algorithmic}[1]
        \For {every request $(u_t,v_t)$ revealed online}  
            \If{$u_t, v_t\in P_1=(A_1,B_1)$}
                \If {$u_t, v_t\in A_1$ or $u_t,v_t\in B_1$} \Comment{$P_1$ ceases to be bipartite, and a new phase begins.}
                    \State \texttt{Recolor}($u_t, 3-c(u_t)$). 
                    \State \texttt{Rebalance}$(\{\{v\}|\,v\in V\},W, \frac{\varepsilon}{2})$. 
                    \State Start a new phase by calling \texttt{Greedy-Recoloring($\mathcal{P}, W$)}.
                \EndIf 
            \Else \Comment{$u_t\in P_1, v_t\in P_2$ s.t. $P_1\neq P_2$ and wlog. $w(P_1)\geq w(P_2), s.t. A_1\cup A_2\subseteq C_1, B_1\cup B_2\subseteq C_2$.}
                \State $w(P_1)\leftarrow w(P_1)+w(P_2)$. \Comment{$P_2$ is merged to $P_1$.}
                \If {$u_t\in A_1$ and $v_t\in B_2$ or $u_t\in A_2$ and $v_t\in B_1$}
                    \State $P_1\leftarrow (A_1\cup A_2, B_1\cup B_2)$. 
                 \Else \Comment{$u_t\in A_1$ and $v_t\in A_2$ or $u_1\in B_1, v_t\in B_2$}
                    \State $P_1\leftarrow (A_1\cup B_2, B_1\cup A_2)$. 
                    \If{$N(C_1)-w(A_1)\geq w(B_2)$ and $N(C_2)-w(B_2)\geq w(A_2)$ and $w(P_2)\leq \frac{\varepsilon W}{4}$}
                        \State For all vertices $v\in A_2$, \texttt{Recolor$(v,2)$}.
                        \State For all vertices $v\in B_2$, \texttt{Recolor$(v,1)$}.
                        \State $\mathcal{P}\leftarrow \mathcal{P}\backslash \{P_2\}$.
                    \Else 
                       \State $\mathcal{P}\leftarrow \mathcal{P}\backslash \{P_2\}$.
                       \If{$\texttt{Rebalance}(\mathcal{P},W, \frac{\varepsilon}{2})$ returns \textbf{infeasible}}
                            \State \texttt{Recolor}($u_t, 3-c(u_t)$). 
                            \State \texttt{Rebalance}$(\{\{v\}|\,v\in V\},W, \frac{\varepsilon}{2})$. 
                            \State Start a new phase by calling \texttt{Greedy-Recoloring($\mathcal{P}, W$)}.
                        \EndIf 
                    \EndIf 
                \EndIf 
            \EndIf
        \EndFor
    \end{algorithmic}
\end{algorithm}
\newpage
\section{Pseudo code and Proofs from Section \ref{sec: logncompetitive}}
\label{app: logncompetitive}
\subsection{Pseudo code of \texttt{Follow-Greedy}}
We give the pseudo code of our algorithm \texttt{Follow-Greedy} as follows. 

\begin{algorithm}[ht]
\caption{\texttt{Follow-Greedy}}
\begin{algorithmic}[1]
\State $\mathcal{P}\leftarrow \emptyset$.
    \For {all $v\in V$}
        \State $P\leftarrow \{v\}$ s.t. $E(P)\leftarrow w(v)$.
        \State $\mathcal{P}\leftarrow \mathcal{P}\cup \{P\}$. 
    \EndFor
        \For {every request $(u_t, v_t)$ revealed online} \Comment{w.l.o.g. $u_t\in P_1=(A_1, B_1), v_t\in P_2=(A_2, B_2)$ s.t. $w(P_1)\geq w(P_2)$ s.t. $A_1\cup A_2\subseteq C_1, B_1\cup B_2\subseteq C_2$}
            \State $w(P_1)\leftarrow w(P_1)+w(P_2)$.
            \State $\mathcal{P}\leftarrow \mathcal{P} \backslash \{P_2\}$.  \Comment{$P_2$ is merged to $P_1$.}
            \If {$u_t\in A_1$ and $v_t\in B_2$ or $u_t\in A_2$ and $v_t\in B_1$}
                \State $P_1\leftarrow (A_1\cup A_2, B_1\cup B_2)$. 
            \Else \Comment{$u_t\in A_1$ and $v_t\in A_2$ or $u_t\in B_1, v_t\in B_2$}
                \State $P_1\leftarrow (A_1\cup B_2, B_1\cup A_2)$.
                \If {$w(P_1)<E(P)(1+\frac{\varepsilon}{4})$}
                    \If {$N(C_1)+w(A_2)\geq B_2$ and $N(C_2)+w(B_2)\geq B_1$}
                        \State For all vertices $v\in B_2$, call \texttt{Recolor($v, 1$)}.
                        \State For all vertices $v\in A_2$, call \texttt{Recolor($v, 2$)}.
                    \Else 
                        \State Call \texttt{Greedy-Recoloring($\mathcal{P}, W$)}.     
                    \EndIf 
                \EndIf 
            \EndIf 
            \If {$w(P_1)>E(P_1)(1+\frac{\varepsilon}{4})$}
                \State $E(P_1)\leftarrow w(P_1)$.
                \State Compute an optimal coloring $c_m$ for $P_1$.
                \If{$N(C_1)+w(A_1\cup A_2)\geq \sum\limits_{v\in V:\, c_m(v)=1} w(v)$ and $N(C_2)+w(B_1\cup B_2)\geq \sum\limits_{v\in V:\, c_m(v)=2} w(v)$} 
                    \State For all vertices $v\in P_1$, \texttt{Recolor($v, c_m(v)$)}.
                \Else
                    \State Call \texttt{Greedy-Recoloring$(\mathcal{P},W)$}.
                \EndIf 
            \EndIf             
        \EndFor
\end{algorithmic}
\end{algorithm}

\subsection{Proof of Lemma \ref{lemma:costofgreedy}}
\costofgreedy*
\begin{proof}
    Since we are guaranteed the existence of a proper coloring which respects capacities, lines 2-6 and 20-22 are never executed when \texttt{Greedy-Recoloring} is called. For the sake of analysis, we say a component $P_i\in \mathcal{P}$ is \textit{light} if $w(P_i)\leq \frac{\varepsilon}{4}W$ and \textit{heavy} otherwise. Whenever a component merge happens, the algorithm always recolors vertices in the lighter component $P_2$ as long as $P_2$ stays small and capacity constraints are not violated. In this case, $w(P_1)\geq 2w(P_2)$, and we charge each vertex $v$ in component $P_2$ where $w(v)>\frac{w(P_2)}{2n}$ a cost $2w(v)$. Since any component has $n$ vertices, $P_2$ must contain always contain a subset of vertices whose total weight is at least $\frac{w(P_2)}{2}$. We call this the \textit{Type 1} charge for a vertex $v$.
    
    On the other hand, whenever $P_2$ is small and vertices in $P_2$ cannot be recolored due to capacity violations, \texttt{Rebalance} is called. In this case, since $P_2$ is small, it follows that a total weight of at least $(1+\varepsilon)W-(1+\frac{\varepsilon}{2})W-w(P_2)\geq \frac{\varepsilon W}{4}$ vertices must have been successfully recolored since the last call to \texttt{Rebalance}. Let us call the set of successfully recolored vertices $S$, s.t. $w(S)\geq \frac{\varepsilon W}{4}$. Calling $\texttt{Rebalance}$ incurs a total recoloring cost of at most $2W$. We charge this cost to vertices in $S$ such that every $v\in S$ s.t. $w(v)\geq \frac{w(S)}{2n}$ is charged a cost of $\frac{16}{\varepsilon}w(v)$. We call this the \textit{Type 2} charge for a vertex $v$.

    If $P_2$ is heavy, \texttt{Rebalance} is always called. The total cost of at most $2W$ is charged to vertices in $P_2$, such that each vertex $v$ s.t. $w(v)>\frac{w(P_2)}{2n}$ is charged a cost of $\frac{16}{\varepsilon}w(v)$. We call this the \textit{Type 3} charge for $v$.

    We show that the total charge accumulated at every vertex $v$ is $O(w(v)\log n)$. This completes the proof since $\sum_{v\in V}O(w(v))\log n)=O(W\log n)$. Observe that after a Type 1 or Type 3 charge to $v$, the weight of its component doubles. It follows that any vertex $v$ where $w(v)>\frac{w(P_2)}{2n}$ can only be charged $O(\log n)$ times a cost of $O(w(v))$ before it is part of a component $P_2$ such that $w(v)\leq \frac{w(P_2)}{2n}$. Finally, note that whenever vertex $v\in S$ receives a Type 2 charge, the size of its component also doubles since the last call to \texttt{Rebalance}. Summarizing, any vertex $v\in V$ is charged at most $O(\frac{1}{\varepsilon})$ times a Type 3 charge and $O(\log n)$ times a Type 1 or Type 2 charge, giving a total charge of $O(\log n)w(v)$ for $v$.
\end{proof}

\newpage
\section{Missing Proofs from Section~\ref{sec: Delta-coloring}}
\label{app: Delta-coloring}
\recoloringprobability*
\begin{proof}
    If the condition of the lemma holds and this is the first step in the current phase when either $u_t$ or $v_t$ is being recolored, then the claim follows from property~(ii) of Lemma~\ref{lemma: rebalancinglemma}.  Otherwise, let $v_t$ denote the last vertex that was recolored (w.l.o.g.). Note that if $v_t$ is a neighbor of $u_t$ before time $t$, then $\Pr[c(u_t)=c(v_t)]=0$. We bound the probability in the case when $v_t$ is not a neighbor of $u_t$ before time $t$. Let $t'$ denote the last time $v_t$ was recolored, and let $d_{v_t}^{t'}$ denote the degree of $v_t$ at time $t'$. Then, $v_t$ had at least $\Delta-d_{v_t}^{t'}$ colors in $L(v_t)$ at time $t'$ to choose from. Consider any set $A'\subseteq [\Delta]$ of colors at time $t$ where $c:\, V\backslash\{v_t\}\rightarrow A'$ and $S'$ denote the set of all feasible colors for $v_t$ at time $t'$ given $A'$. Let $\mathcal{A}$ denote the set of all possible subsets $A'$. We have that, 
    \begin{align*}
        \Pr[c(u_t)=c(v_t)]&=\sum_{A'\in \mathcal{A}}\sum_{i\in S'}\Pr[L(v_t)=S']\cdot \Pr[c(u_t)=i|\, L(v_t)=S']\cdot \Pr[c(v_t)=i|\, c(u_t)=i, L(v_t)=S'] \\
        &=\sum_{A'\in \mathcal{A}} \Pr[L(v_t)=S']\frac{1}{|S'|}\sum_{i\in S'}\Pr[c(u_t)=i|\, L(v_t)=S'] \\
        &\leq \sum_{A'\in \mathcal{A}}\Pr [L(v_t)=S']\frac{1}{|S'|}\cdot 1 \leq \sum_{A'\in \mathcal{A}}\Pr [L(v_t)=S']\frac{1}{\Delta-d_{v_t}^{t'}} \leq \frac{1}{\varepsilon\Delta}
    \end{align*}
\end{proof}

\end{document}